\title{\LARGE \bf
On Robustness Metrics for Learning STL Tasks
}
\author{Peter Varnai and Dimos V. Dimarogonas$^{1}$
\thanks{This work was partially supported by the Wallenberg AI, Autonomous Systems and Software Program (WASP) funded by the Knut and Alice Wallenberg Foundation, the Swedish Research Council (VR), the SSF COIN project, and the EU H2020 Co4Robots project.}
\thanks{$^{1}$Both authors are with the Division of Decision and Control Systems, School of Electrical Engineering and Computer Science, KTH Royal Institute of Technology, 114 28 Stockholm, Sweden. {\tt\small varnai@kth.se} (P. Varnai), {\tt\small dimos@kth.se} (D. V. Dimarogonas)}%
}
\newtheoremstyle{bfplain}{}{}{\itshape}{}{\bfseries}{.}{ }{\thmname{#1}\thmnumber{ #2}\thmnote{ (#3)}}
\newtheoremstyle{bfdefinition}{}{}{}{}{\bfseries}{.}{ }{\thmname{#1}\thmnumber{ #2}\thmnote{ (#3)}}
\newtheoremstyle{itdefinition}{}{}{}{}{\itshape}{.}{ }{\thmname{#1}\thmnumber{ #2}\thmnote{ (#3)}}
\newcommand{\satisfies}{\vDash}
\renewcommand{\and}{\wedge}
\NewDocumentCommand{\until}{oo}{
	\IfNoValueTF{#1}{U}{U_{[#1,#2]}}
}
\NewDocumentCommand{\eventually}{oo}{
	\IfNoValueTF{#1}{F}{F_{[#1,#2]}}
}
\NewDocumentCommand{\always}{oo}{
	\IfNoValueTF{#1}{G}{G_{[#1,#2]}}
}
\newcommand{\diff}{\mathrm{d}}
\newcommand{\tp}{^{\textsc{T}}}
\renewcommand{\iff}{\Leftrightarrow}
\newcommand{\subnorm}[1]{_{#1}}
\NewDocumentCommand{\norm}{som}{
	\IfBooleanTF {#1}{\IfNoValueTF{#2} {\| #3 \|}{\| #3 \|\subnorm{#2}}}
	{\IfNoValueTF{#2} {	\left\| #3 \right\|}{\left\| #3 \right\|\subnorm{#2}}}
}
\NewDocumentCommand{\inReal}{soo}{
	\IfBooleanTF {#1}{}{\in}
	\mathbb{R}
	\IfNoValueTF{#2}{}{\IfNoValueTF{#3}{^{#2}}{^{#2 \times #3}}}	
}
\NewDocumentCommand{\inComplex}{soo}{
	\IfBooleanTF {#1}{}{\in}
	\mathbb{C}
	\IfNoValueTF{#2}{}{\IfNoValueTF{#3}{^{#2}}{^{#2 \times #3}}}	
}
\newcommand{\submat}[1]{_{\scriptstyle{#1}}}
\newcommand{\subvec}[1]{_{\scriptstyle{#1}}}
\newcommand{\supmat}[1]{^{\scriptstyle{#1}}}
\newcommand{\supvec}[1]{^{\scriptstyle{#1}}}
\newcommand{\mat}[1]{\mathbf{#1}}
\renewcommand{\vec}[1]{\boldsymbol{#1}}
\newcommand{\defvec}[2] {
	\DeclareDocumentCommand{#1}{s d<> d[] d''} {
		\IfBooleanTF {##1}
			{\IfNoValueTF{##2}{#2}{##2{#2}}}
			{\IfNoValueTF{##2}{\vec{#2}}{##2{\vec{#2}}}}		
		\IfNoValueTF{##3}{}{\subvec{##3}}
		\IfNoValueTF{##4}{}{\supvec{##4}}
	}
}
\newcommand{\defmat}[2] {
	\DeclareDocumentCommand{#1}{s d<> d[] d''} {
		\IfBooleanTF {##1}
			{\IfNoValueTF{##2}{#2}{[##2{#2}]}}
			{\IfNoValueTF{##2}{\mat{#2}}{##2{\mat{#2}}}}			
		\IfNoValueTF{##3}{}{\submat{##3}}
		\IfNoValueTF{##4}{}{\supmat{##4}}
	}
}
\defmat{\A}{A}
\defmat{\B}{B}
\defmat{\C}{C}
\defmat{\D}{D}
\defmat{\E}{E}
\defmat{\F}{F}
\defmat{\G}{G}
\defmat{\H}{H}
\defmat{\I}{I}
\defmat{\J}{J}
\defmat{\K}{K}
\defmat{\L}{L}
\defmat{\M}{M}
\defmat{\P}{P}
\defmat{\Q}{Q}
\defmat{\R}{R}
\defmat{\S}{S}
\defmat{\T}{T}
\defmat{\U}{U}
\defmat{\V}{V}
\defmat{\W}{W}
\defmat{\X}{X}
\defmat{\Y}{Y}
\defmat{\Z}{Z}
\defmat{\NMAT}{0}
\defmat{\SIG}{\Sigma}
\defmat{\LAM}{\Lambda}
\defvec{\ones}{1}
\defvec{\a}{a}
\defvec{\b}{b}
\defvec{\d}{d}
\defvec{\e}{e}
\defvec{\f}{f}
\defvec{\g}{g}
\defvec{\h}{h}
\defvec{\k}{k}
\defvec{\l}{l}
\defvec{\m}{m}
\defvec{\n}{n}
\defvec{\q}{q}
\defvec{\p}{p}
\defvec{\s}{s}
\defvec{\t}{t}
\defvec{\u}{u}
\defvec{\v}{v}
\defvec{\w}{w}
\defvec{\x}{x}
\defvec{\y}{y}
\defvec{\z}{z}
\defvec{\nvec}{0}
\defvec{\lam}{\lambda}
\defvec{\nuvec}{\nu}
\defvec{\pivec}{\pi}
\defvec{\phivec}{\phi}
\defvec{\rhovec}{\rho}
\defvec{\sigvec}{\sigma}
\defvec{\thetavec}{\theta}
\defvec{\alphavec}{\alpha}
\defvec{\gammavec}{\gamma}
\defvec{\Gammavec}{\Gamma}
\theoremstyle{bfdefinition}
\newtheorem{theorem}{Theorem}
\newtheorem{proposition}{Proposition}
\newtheorem{property}{Property}
\theoremstyle{itdefinition}
\newtheorem{remark}{Remark}
\colorlet{darkgreen}{green!75!black}
\colorlet{lightblue}{blue!66!white}
\colorlet{lightred}{red!77!white}
\colorlet{darkpurple}{red!50!blue!75!black}
\newcommand{\cmark}{\textcolor{darkgreen}{\ding{51}}}%
\newcommand{\xmark}{{\color{red}\ding{55}}}%
\newcommand{\NEG}{\mathcal{N}}
\NewDocumentCommand{\AND}{o}{
	\IfNoValueTF{#1}{\mathcal{A}}{\mathcal{A}_{#1}}
}
\NewDocumentCommand{\OR}{o}{
	\IfNoValueTF{#1}{\mathcal{O}}{\mathcal{O}_{#1}}
}
\NewDocumentCommand{\EVENTUALLY}{oo}{
	\IfNoValueTF{#1}{\mathcal{F}}{\mathcal{F}_{[#1,#2]}}
}
\NewDocumentCommand{\ALWAYS}{oo}{
	\IfNoValueTF{#1}{\mathcal{G}}{\mathcal{G}_{[#1,#2]}}
}
\DeclareAcronym{TL}{
	short = TL,
	long = temporal logic
}
\DeclareAcronym{LTL}{
	short = LTL,
	long = linear temporal logic
}
\DeclareAcronym{TLTL}{
	short = TLTL,
	long = truncated linear temporal logic
}
\DeclareAcronym{MITL}{
	short = MITL,
	long = metric interval temporal logic
}
\DeclareAcronym{MTL}{
	short = MTL,
	long = metric temporal logic
}
\DeclareAcronym{STL}{
	short = STL,
	long = signal temporal logic
}
\DeclareAcronym{PPC}{
	short = PPC,
	long = prescribed performance control
}
\DeclareAcronym{PI2}{
	short = {PI$^2$},
	long = policy improvement with path integrals
}
\DeclareAcronym{GPI2}{
	short = {G-PI$^2$},
	long = guided policy improvement with path integrals
}
\DeclareAcronym{AGPI2}{
	short = {AG-PI$^2$},
	long = adaptive guided policy improvement with path integrals
}
\DeclareAcronym{TLPS}{
	short = TLPS,
	long = temporal logic policy search
}
\DeclareAcronym{ReLU}{
	short = ReLU,
	long = rectified linear unit
}
\DeclareAcronym{RL}{
	short = RL,
	long = reinforcement learning
}
\DeclareAcronym{MPC}{
	short = MPC,
	long = model predictive control
}
\DeclareAcronym{MDP}{
	short = MDP,
	long = Markov decision process
}
\DeclareAcronym{MPNN}{
	short = MPNN,
	long = message passing neural network
}
\DeclareAcronym{GNN}{
	short = GNN,
	long = graph neural network
}
\DeclareAcronym{MILP}{
	short = MILP,
	long = mixed-integer linear program
}
\DeclareAcronym{HJB}{
	short = HJB,
	long = Hamilton-Jacobi-Bellman
}
\DeclareAcronym{QP}{
	short = QP,
	long = quadratic program
}
\DeclareAcronym{LP}{
	short = LP,
	long = linear program
}
\DeclareAcronym{REPS}{
	short = REPS,
	long = relative entropy policy search
}
\DeclareAcronym{SAT}{
	short = SAT,
	long = Boolean satisfiability problem
}
\begin{document}

\maketitle
\thispagestyle{empty}
\pagestyle{empty}

\begin{abstract}

Signal temporal logic (STL) is a powerful tool for describing complex behaviors for dynamical systems. Among many approaches, the control problem for systems under STL task constraints is well suited for learning-based solutions, because STL is equipped with robustness metrics that quantify the satisfaction of task specifications and thus serve as useful rewards. In this work, we examine existing and potential robustness metrics specifically from the perspective of how they can aid such learning algorithms. We show that various desirable properties restrict the form of potential metrics, and introduce a new one based on the results. The effectiveness of this new robustness metric for accelerating the learning procedure is demonstrated through an insightful case study.

\end{abstract}

\section{INTRODUCTION}

Formal methods have much potential in the field of robotics due to the ability of temporal logics to express rich and complex desired system behaviors. In accordance, developing control methods for achieving these behaviors has become an area of increasing research interest. Different temporal logics, such as \ac{LTL} \cite{pnueli1977temporal}, \ac{MTL} \cite{koymans1990specifying}, or \ac{STL} \cite{maler2004monitoring}, can be used to formulate different types of behavioral specifications. This has led to a wide array of control approaches involving guarantees of task satisfaction constraints through automata-based planning \cite{tabuada2006linear}, mixed-integer linear programming \cite{saha2016milp}, or feedback-based control laws \cite{lindemann2017prescribed}, among many others. Recently, reinforcement learning methods have also been investigated in this context \cite{sadigh2014learning, aksaray2016q, muniraj2018enforcing}.

Focusing on learning-based approaches, \ac{STL} is of special interest; unlike most other temporal logics, it allows definitions of robustness metrics associated to the degree of task satisfaction \cite{donze2010robust}. These metrics quantify how much a task is satisfied or violated, and serve as more descriptive rewards for learning than a simple true/false answer as to whether task satisfaction is achieved or not. Thus, robustness metrics can be seen as a general form of \textit{reward shaping} for the class of behaviors described by \ac{STL} specifications. Reward shaping is well-known to play a crucial role in the convergence of learning methods \cite{ng1999policy}, which motivates the study of robustness metrics from such a learning perspective.

Recently, many new extensions of the traditional robustness metric \cite{donze2010robust} for \ac{STL} have been introduced in the literature for various purposes. Examples include discretized and cumulative definitions aiming to ease computational burdens and to smoothen the robustness metric for use in real-time optimization \cite{lindemann2019robust, pant2017smooth, haghighi2019control}. These rely on point-wise or smooth approximations of the $\min$ and $\max$ operators in \cite{donze2010robust}, and do not preserve the desirable property of having the sign of the metric directly relate to the satisfaction of the corresponding \ac{STL} expression. In \cite{mehdipour2019arithmetic}, the authors define a metric that preserves this so-called \textit{soundness} property. The goal therein was to achieve higher robustness against noise by maximizing their metric instead of the traditional one.

To the extent of the authors' knowledge, this work is the first to consider \ac{STL} robustness metrics explicitly for their role in accelerating learning procedures. In particular, the focus is on the so-called \textit{shadowing problem} of the traditional robustness metric \cite{donze2010robust}: the definition of this metric is such that increasing the robustness of one term in a conjunction of propositions does not show in the robustness computed for the conjunction itself. This makes it more difficult for learning methods to find improvements towards task satisfaction through exploration. Our main contribution towards countering this problem is two-fold. First, we provide a theoretical study of some fundamental limitations involved in designing any robustness metric that is assumed to satisfy a set of chosen desirable properties for its role in aiding learning algorithms. Second, we use the findings to define a new class of robustness metrics specifically engineered for aiding exploration. The effectiveness of this new metric compared to previous ones is demonstrated in a case study.

The rest of the paper is organized as follows. Section II provides  background on \ac{STL}, its robustness metrics, and briefly outlines the algorithm used to compare their performance for aiding learning. Sections III and IV organize desirable properties for constructing metrics, and examine the restrictions these impose on them. Section V introduces a class of robustness metrics oriented towards accelerating learning, which is demonstrated by a simulation case study in Section VI. Concluding remarks are given in Section VII.

\section{PRELIMINARIES} \label{section:preliminaries}

\subsection{\Acf{STL}}

In \ac{STL}, the predicates are defined over continuous-time signals, such as the state $\x$ of the system \cite{maler2004monitoring}. The logical \textit{predicates} $\mu$ are either true ($\top$) or false ($\bot$), which is determined according to a corresponding function $h^{\mu}:\inReal*[n] \rightarrow \inReal*$ as $\mu = \top$ if $h^{\mu}(\x) \geq 0$ and $\mu = \bot$ otherwise. The predicates can be recursively combined using Boolean and temporal operators to form more complex \textit{task specifications} $\phi$:
\begin{equation*}
\phi := \top \ |\  \mu \ |\ \neg \phi \ |\ \phi_1 \and \phi_2 \ |\ \phi_1 \until[a][b]\phi_2,
\end{equation*}
where the \textit{until operator} $\until[a][b]$ requires $\phi_1$ to hold until $\phi_2$ eventually becomes true in the time interval defined by $[a, b]$. For a formal definition of the \ac{STL} language semantics, we refer to \cite{maler2004monitoring}. Briefly, satisfaction of an expression (denoted $(\x, t) \satisfies \phi$) can be computed recursively using the semantics $(\x, t) \satisfies \mu \iff h^{\mu}(\x(t)) \ge 0$; $(\x, t) \satisfies \neg\phi \iff \neg((\x, t) \satisfies \phi)$; $(\x, t) \satisfies \phi_1 \and \phi_2 \iff (\x, t) \satisfies \phi_1 \and (\x, t) \satisfies \phi_2$; and $(\x, t) \satisfies  \phi_1 \until[a][b]\phi_2 \iff \exists t_1 \in [t+a, t+b]$ such that $(\x, t_1) \satisfies \phi_2$ and $(\x, t_2) \satisfies \phi_1$ $\forall t_2 \in [t, t_1]$.  Other expressive temporal operators include \textit{eventually} and \textit{always}, whose definition follows from $\eventually[a][b]\phi = \top \until[a][b]\phi$ and $\always[a][b]\phi = \neg \eventually[a][b] \neg \phi$.
\subsection{Robustness metrics for \ac{STL}}

An advantage of \ac{STL} is that it allows the definition of different \textit{robustness metrics}, which give a quantitative indication of how well a task expression is satisfied. In this work, we consider spatial robustness metrics for \ac{STL}, i.e., metrics which give an indication towards how well the formula is satisfied given the imposed timing specifications. There are also notions of time robustness (e.g. \cite{akazaki2015time}), which quantify the extent to which the timing requirements are met.

The original metric as defined in \cite{donze2010robust} will be referred to as the \textit{traditional} robustness metric and is evaluated as follows for a selection of operators that are the focus of this work:
\begin{align}
\rho^\mu(\x, t) &= \rho^\mu(\x(t)) = h^{\mu}(\x(t)) \nonumber \\
\rho^{\neg \phi}(\x, t) &= -\rho^{\phi}(\x,t) \nonumber \\
\rho^{\phi_1 \and \phi_2}(\x, t) &= \min\left(\rho^{\phi_1}(\x, t),\rho^{\phi_2}(\x, t)\right) \label{eq:traditionalMetric} \\
\rho^{\eventually[a][b]\phi}(\x, t)  &= \max_{t' \in [t+a,t+b]}\rho^{\phi}(\x,t') \nonumber \\
\rho^{\always[a][b]\phi}(\x, t)  &= \min_{t' \in [t+a,t+b]}\rho^{\phi}(\x,t'). \nonumber
\end{align}

A basic property of this robustness metric is that its sign gives an explicit indication of whether or not its corresponding \ac{STL} specification is satisfied. Mathematically, this is described as the following property \cite{donze2010robust}.

\begin{property}[Soundness]
	A robustness metric is sound if, for any specification $\phi$, $\rho^\phi(\x, t) \ge 0$ if and only if the signal $\x$ satisfies $\phi$ at time $t$, i.e., $(\x, t) \satisfies \phi$.
\end{property}

Most robustness metrics recently defined in the literature \cite{lindemann2019robust, pant2017smooth, haghighi2019control} do not preserve this property in exchange for gains from a computational or optimization perspective. On the other hand, the metric \cite{mehdipour2019arithmetic} was defined using notions of arithmetic and geometric means in a way such that this property is preserved. The goal of this so-called AG metric was to better express robustness of some formulas, e.g., the conjunction of the set of metrics $\{1, 1, 1, 1, 1\}$ should receive a lower robustness score than the set $\{1, 10, 10, 10, 10\}$, whereas both achieve a score of $1$ using the traditional definition. Note that this behavior also helps counter the shadowing problem discussed in the previous section.

\subsection{Guided \acf{PI2}} \label{section:PI2}

\ac{PI2} is a reinforcement learning algorithm for solving optimal control problems in unknown environments. Following our recent work regarding its application to satisfying \ac{STL} tasks, its guided variant will be used to compare the performance of different robustness metrics in terms of accelerating the learning process. Due to space limitations, here we give a brief outline of the method, and refer to \cite{varnai2019prescribed} for details.

Guided \ac{PI2} searches for a parameterized policy $\pi(\x,t) = \u<\hat>(\x, t) + \k_\theta(t)$ in order to minimize a user-defined cost function $J(\tau)$ of the system trajectory $\tau$ from a given initial state $\x[0]$. Here, $\u<\hat>(\x, t)$ is a feedback controller used to guide exploration by aiming to impose given timing constraints, i.e., \textit{funnels} $\rho^{\mu_i}(\x(t)) \ge \gamma(t)$, on the evolution of the $i=1,\dots,M$ atomic propositions $\mu_i$ composing the \ac{STL} task $\phi$. The algorithm seeks to find the feedforward terms $\k_\theta(t)$ which (locally) minimize $J(\tau)$. In its $(k)$-th iteration, a set of $N$ parameters are sampled from around the current solution estimate $\theta^{(k)}$, the corresponding controllers' performances are evaluated using the cost $J(\tau)$, and the $\theta$ parameters are updated towards the more optimal ones. The iterations are repeated a given number of times or until convergence. In the context of satisfying $\phi$, the cost function $J(\tau)$ is composed of a cost of interest $C(\tau)$ to be minimized and a penalty term for progressively enforcing task satisfaction by penalizing negative values of the corresponding robustness $\rho^\phi(\x,0)$.

\section{STRUCTURED DEFINITION OF ROBUSTNESS METRICS} \label{section:robustnessDef}

In order to algorithmically calculate the robustness metrics of complex \ac{STL} formulas, it is useful to define them in a recursive manner. This was the case for the traditional robustness measure (\ref{eq:traditionalMetric}) given in the previous section; the expressions for the negation, conjunction, eventually, and always operators all rely on previously computed robustnesses. 

A further desirable property of any robustness metric is for its value to remain unaltered under invariant changes to \ac{STL} formulas themselves. For example, in terms of Boolean truth value, the expression $\mu_1 \and \mu_2$ is equivalent to $\neg (\neg \mu_1 \vee \neg \mu_2)$. Computing the robustness using the recursions defined by either expression should return the same value. Transferring various properties of \ac{STL} onto its robustness metrics thus allows us to establish identities between the operators of the latter, leading to its definition in a structured manner.

Formally, we introduce the abstract operators $\NEG$, $\AND$, $\OR$, $\EVENTUALLY$, and $\ALWAYS$ for negation, conjunction, disjunction, and the eventually and always operators. We then require the well-known De Morgan identities of Boolean and temporal logic to be transferred to the robustness metric itself. Thus,
\begin{equation*}
\OR(\phi_1, \dots, \phi_M) = \NEG(\AND(\NEG(\phi_1), \dots, \NEG(\phi_M)))
\end{equation*}
implies that the robustness metric must satisfy:
\begin{equation} \label{eq:orDef}
\OR(\rho_1, \dots, \rho_M) = \NEG(\AND(\NEG(\rho_1), \dots, \NEG(\rho_M))).
\end{equation}
Similarly, between the always and eventually operators, the identity
\begin{equation*}
\EVENTUALLY[a][b](\phi) = \NEG(\ALWAYS[a][b](\NEG(\phi)))
\end{equation*}
translates to
\begin{equation} \label{eq:eventuallyDef}
\EVENTUALLY[a][b](\rho^{\phi}(\x, t)) = \NEG(\ALWAYS[a][b](\NEG(\rho^\phi(\x,t)))).
\end{equation}
Further note that the always operator can be interpreted as a conjunction over a given time frame, such as in the definition of the traditional robustness metric (\ref{eq:traditionalMetric}). Discretizing the interval $[a, b]$ into $i=1,\dots,M$ evenly spaced points $t_i$, we can define 
\begin{equation} \label{eq:alwaysDef}
	\ALWAYS[a][b](\rho^{\phi}(\x, t)) = \lim_{M\rightarrow \infty}\AND(\rho^{\phi}(\x,t+t_1), \dots, \rho^{\phi}(\x,t+t_M))
\end{equation}

The identities (\ref{eq:orDef}), (\ref{eq:eventuallyDef}), and (\ref{eq:alwaysDef}) allow us to define new robustness metrics in a structured manner, from the elementary definitions of the $\NEG$ and $\AND$ operators. The former can also be excluded from the design by setting $\NEG(\rho^{\phi}) := -\rho^{\phi}$, a natural choice for quantifying the satisfaction or violation of a task symmetrically by the same magnitude robustness metric. Thus, the operators required to evaluate the robustness of \ac{STL} formulas recursively can be constructed given the definition of the single AND operator $\AND$, significantly easing the discussion related to constructing new metrics.

As an example, the arithmetic-geometric mean robustness metric of \cite{mehdipour2019arithmetic} (denoted by `AG') can be constructed from:
\begin{equation} \label{eq:AGand}
\AND^{\text{AG}}(\rho_1,\dots,\rho_M)
= \begin{cases}
\frac{1}{M}\sum_{i=1}^{M} \max(\rho_i, 0)\ \ \text{if } \min_i \rho_i \le 0,\\
\sqrt[M]{\prod_{i=1}^{M}(1 + \rho_i)}-1 \ \ \text{otherwise.}
\end{cases}
\end{equation}
Similarly, the traditional robustness metric (\ref{eq:traditionalMetric}) can be constructed from $\AND^{\text{trad}}(\rho_1,\dots,\rho_M) = \min(\rho_1,\dots,\rho_M)$.

\begin{remark}
	For evaluating the always operator using (\ref{eq:alwaysDef}), summations and products in the definition of $\AND$ are replaced by integrals and product integrals in the limit $M \rightarrow \infty$: considering a time horizon $[a, b]$ with a discretized set of points $t_1, \dots, t_M \in [a, b]$, we have $\lim_{M \rightarrow \infty} \frac{1}{M} \sum_{i=1}^{M} f(t_i) = \frac{1}{b-a}\int_{a}^{b}f(t) \diff t$, and for products $\lim_{M \rightarrow \infty} \sqrt[M]{\prod_{i=1}^{M} f(t_i)} = e^{\frac{1}{b-a}\int_{a}^{b} \ln f(t) \diff t}$. Such sums and products appear in the AG metric (\ref{eq:AGand}) and the proposed metric (\ref{eq:defSNC}) as well, though we keep the discretized form of the definitions for simplicity.
\end{remark}


\section{PROPERTIES OF ROBUSTNESS METRICS} \label{section:solution}

Based on the previous section, it is convenient to define robustness metrics through their AND operator $\AND$. In the following, we examine desirable properties of this operator from a learning-based perspective, as well as some fundamental restrictions that their satisfaction imposes on the form of the operator itself. In order to be clearer with notation, we denote the operator for a conjunction of $M$ terms by $\AND[M]$.

\subsection{Desirable properties}

We consider desirable properties of the AND operator from a learning-based perspective, i.e., for the potential role of robustness metrics as rewards for learning-based methods. 

1) First, consider the fundamental Boolean identities of idempotence and commutativity:
\begin{property}[Idempotence, commutativity]
	The AND operator $\AND[M]$ is idempotent and commutative if:
	\begin{enumerate}[(i)]
		\item $\AND[M](\rho,\dots,\rho) = \rho$, and
		\item for any permutation ${k_i}$ of the integers $i = 1, \dots, M$, $\AND[M](\rho_i,\dots,\rho_M) = \AND[M](\rho_{k_i},\dots,\rho_{k_M})$.
	\end{enumerate}
\end{property}
Note that associativity is not a fundamental property we wish to preserve, as it contradicts a later, more desirable property. This is also the reason we define the $\AND[M]$ operator as a function of $M$ operands; e.g., the identity $\AND[3]\left(\rho_1, \rho_2, \rho_3\right) = \AND[2]\left[\rho_1, \AND[2](\rho_2, \rho_3)\right]$, whose equivalent holds for Boolean logic, will not hold in general.

2) Second, from a general optimization perspective, it is desirable for the operator $\AND$ to be smooth in order to aid gradient-based and acceleration methods.

\begin{property}[Weak smoothness]
	$\AND[M](\rho_1,\dots,\rho_M)$ is \textit{weakly smooth} if (i) it is continuous everywhere, and (ii) if its gradient is continuous for all points for which there are no two indices $i \ne j$ satisfying $\rho_i = \rho_j = \min_{k \in \{1,\dots,M\}} \rho_k$.
\end{property}
Unlike simply requiring smoothness almost everywhere, this definition expresses the desire for smoothness at points where there is a unique minimal term. In particular, a special point of interest is when the robustness metric switches sign, indicating that the corresponding \ac{STL} specification has become true or false.  Neither the traditional nor the AG robustness metrics are smooth at such points.

3) Third, from a learning-based perspective of guiding towards task satisfaction, it is important to address the \textit{shadowing problem} outlined in the introduction. With the traditional metric $\AND[M](\rho_1, \dots, \rho_M) = \min(\rho_1, \dots, \rho_M)$, an increase of any $\rho_i$ is not seen in the robustness of the conjunction, unless $\rho_i$ was the unique minimum of the $M$ terms. For example, if each $\rho_i = \rho$, this behavior is undesired; the reward should indicate that an increase of any $\rho_i$ would at some point be beneficial for increasing the robustness of the conjunction itself. The shadowing problem thus makes it difficult for a learning algorithm to find improvements towards task satisfaction. The mathematical formulation of tackling the shadowing problem is given as:
\begin{property}[Shadow-lifting property] \label{property:shadowingProperty}
	The operator $\AND[M]$ satisfies the shadow-lifting property if, for any $\rho \ne 0$, $\left.\frac{\partial \AND[M]\left(\rho_1, \dots, \rho_i, \dots, \rho_M \right)}{\partial \rho_i}\right|_{\rho_1, \dots, \rho_M = \rho} > 0$ holds $ \forall i = 1, \dots, M$.
\end{property}
Considering a set of points $\rho_1, \dots, \rho_M = \rho$, the robustness metric of their conjunction using the traditional metric would only show an increase if all $\rho_i$ terms increase. On the other hand, the shadow-lifting property implies that $\AND[M]\left(\rho, \dots, \rho \right)$ also increases when making partial progress towards this goal and increasing only a set of the $\rho_i$ terms. The more elements of the conjunction change, the greater increase we see in the robustness due to the linearity of the partial derivatives. The AG robustness satisfies the shadow-lifting property; however, we will see that being too rewarding for positive changes of the terms in a conjunction has pitfalls related to local minima in more complex tasks.
\begin{remark}
	The shadow-lifting property prohibits the associative property of the AND operator to be satisfied. For example, if it were, the robustness of the conjunction of the two sets of robustness metrics $\{1, 1, 1+\epsilon\}$ and $\{1, 1+\epsilon, 1+\epsilon\}$ would be both equivalent to that of $\{1, 1+\epsilon\}$, whereas the second one should be higher.
\end{remark}

4) Finally, we consider two additional unclassified properties that impose natural restrictions on the AND operator. 
\begin{property}[min/max boundedness] \vspace{-1mm}
	The operator $\AND[M]$ is min/max bounded if it satisfies the inequality \vspace{-1mm}
	\begin{equation*} 
		\min(\rho_1,\dots,\rho_M) \le \AND[M](\rho_1,\dots,\rho_M) \le \max(\rho_1,\dots,\rho_M).
	\end{equation*}
\end{property}
\begin{property}[Scale-invariance]
	The operator $\AND[M]$ is said to be scale-invariant if, for any $\alpha \ge 0$, it satisfies the identity\vspace{-1mm}
	\begin{equation} \label{eq:scaleInvariance} \vspace{-1mm}
	\AND[M](\alpha\rho_1, \dots, \alpha\rho_M) = \alpha\AND[M](\rho_1, \dots, \rho_M).
	\end{equation}
\end{property}

The former property is useful for placing fundamental restrictions on the values of $\AND[M](\cdot)$. The latter is desirable for the AND function to behave similarly regardless of the order of magnitude of its robustness metric terms, e.g, in case we do not know their order of magnitude in advance. 

Table I on the right summarizes the properties satisfied by the traditional and AG robustness metrics.

\subsection{Imposed restrictions}

Properties 1-6 impose fundamental restrictions on the form of the operator $\AND[M]$ used for constructing any sought-after robustness metric. In the following, various propositions regarding these restrictions are presented in order to motivate the definition of a new robustness metric in Section \ref{section:newMetric}.

\begin{proposition}
	The operator $\AND[M]$ cannot be sound, idempotent, and smooth simultaneously.
\end{proposition}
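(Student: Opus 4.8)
The plan is to argue by contradiction: assume $\AND[M]$ (with $M \ge 2$) is simultaneously sound, idempotent, and smooth, and then evaluate the gradient of $\AND[M]$ at the origin $\rho_1 = \dots = \rho_M = 0$ in two incompatible ways. The guiding idea is that soundness completely determines the behaviour of $\AND[M]$ on the boundary of the non-negative orthant, forcing its gradient at the origin to vanish, whereas idempotence forces a strictly positive derivative along the diagonal direction. The two conclusions clash as soon as $\AND[M]$ is differentiable at the origin.

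First I would extract the consequences of soundness together with continuity of $\AND[M]$. Idempotence gives $\AND[M](0,\dots,0) = 0$. Now fix an index $j$ and $t > 0$, and consider the point whose $j$-th argument is $t$ and whose other $M-1$ arguments are $0$. Since $M \ge 2$, this point lies on the boundary of the non-negative orthant: all its coordinates are $\ge 0$, so soundness gives $\AND[M] \ge 0$, while it is a limit of points obtained by driving one of the vanishing coordinates slightly negative, on which soundness gives $\AND[M] < 0$; continuity then forces $\AND[M] = 0$ there. Hence $\AND[M]$ is identically $0$ along each non-negative coordinate ray through the origin. The one-sided derivative of $\AND[M]$ in each $+\rho_j$ direction at the origin is therefore $0$, so differentiability at the origin forces every partial derivative $\partial \AND[M]/\partial\rho_j$ to equal $0$, i.e.\ the gradient of $\AND[M]$ vanishes at the origin.

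Next I would use idempotence along the diagonal. Since $\AND[M](t,\dots,t) = t$, differentiating at $t = 0$ shows that the directional derivative of $\AND[M]$ along the all-ones direction equals $1$. If $\AND[M]$ were differentiable at the origin, this directional derivative would also be the inner product of the (zero) gradient with $(1,\dots,1)$, hence equal to $0$. This yields $0 = 1$, the desired contradiction. Commutativity is not strictly needed here; it only makes the vanishing of the individual partials symmetric.

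The step I expect to be most delicate is pinning down the notion of smoothness and locating the contradiction correctly. The incompatibility is concentrated precisely at the all-equal point $\rho_1 = \dots = \rho_M$, where every conjunct simultaneously attains the minimum --- exactly the configuration that the weak smoothness property deliberately exempts from its differentiability requirement. Consequently the argument only closes when ``smooth'' is read as genuine differentiability that includes such tie points, and the proposition is then best understood as the justification for retreating to weak smoothness: one cannot retain both soundness and idempotence while also insisting on a well-defined gradient where the conjuncts coincide. I would therefore take care to state which smoothness notion is assumed and to justify rigorously that differentiability at the origin forces each partial derivative to match the zero one-sided axis derivative established in the second step.
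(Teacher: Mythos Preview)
Your proposal is correct and follows essentially the same approach as the paper: both arguments use soundness plus continuity to force $\AND[M]$ to vanish along the positive coordinate axes through the origin, deduce that all partial derivatives at the origin are zero, and then contradict idempotence via the diagonal (the paper phrases this last step as a first-order Taylor expansion, you as a directional derivative, but these are equivalent). Your additional commentary on where the smoothness assumption bites---namely at the tie point $\rho_1=\dots=\rho_M$---is apt and aligns with the paper's subsequent retreat to weak smoothness.
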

\begin{proof}
	Consider the behavior of $\AND[2](\rho_1, \rho_2)$ for $\rho_1 = 0$ and $\rho_2 > 0$.  By soundness, $\AND[2](\rho_1,\rho_2)$ must switch sign as $\rho_1$ switches sign; continuity therefore implies $\AND[2](0,\rho_2) = 0$ for any $\rho_2 > 0$. In particular, $\AND[2](0,\rho_2)$ remains 0 when $\rho_2 \rightarrow 0+$, thus $\left.\frac{\partial \AND[2](\rho_1, \rho_2)}{\partial \rho_2}\right|_{\rho_1,\rho_2 = 0} = 0$. The same argument holds for the partial derivative with respect to $\rho_1$ at this point. For $\epsilon \rightarrow 0+$, in the first-order approximation we must thus have $\AND[2](\epsilon,\epsilon) = \left(\frac{\partial \AND[2](\rho_1,\rho_2)}{\partial \rho_1} + \frac{\partial \AND[2](\rho_1,\rho_2)}{\partial \rho_2}\right)\epsilon = 0$. This, however, contradicts $\AND[2](\epsilon, \epsilon) = \epsilon$ implied by idempotence.
\end{proof}

The proposition implies that smoothness across the entire domain is a too strict requirement. As we will see, however, the operator $\AND[M]$ can be \textit{weakly smooth}. An interesting question is if the gradient could be continuous at more points than required by weak smoothness; e.g., one could require continuity of the gradient at all points except where there are two indices $i,j$ such that $\rho_i = \rho_j = \min_k \rho_k \ne 0$. Note the added `$\ne 0$' condition compared to weak smoothness. Although the study of this question is outside the scope of this work, the metric defined in the next section will actually be smooth at points where all $\rho_i$ are equal, giving merit to the idea. The following proposition relates smoothness at such a key point of interest to the shadow-lifting property.

\begin{proposition}
	Assume the AND operator $\AND[M]$ is defined such that it is smooth and satisfies Property 2. Then, for any $\rho \ne 0$, the operator satisfies
	\begin{equation} \label{eq:equalityPartial}
	\left.\frac{\partial \AND[M](\rho_1, \dots, \rho_M)}{\partial \rho_i}\right|_{\rho_1, \dots, \rho_M = \rho} = \frac{1}{M}, \ \forall i=1,\dots,M
	\end{equation}
	which is positive and thus implies that Property 4 also holds.
\end{proposition}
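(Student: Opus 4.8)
The plan is to combine idempotence and commutativity (Property 2) with the assumed smoothness. First I would introduce the diagonal restriction $g(\rho) := \AND[M](\rho, \dots, \rho)$. By idempotence we have $g(\rho) = \rho$ identically, so $g'(\rho) = 1$; on the other hand, the chain rule expresses $g'(\rho)$ as the sum of the partial derivatives of $\AND[M]$ with respect to each of its arguments, all evaluated at the diagonal point $\rho_1, \dots, \rho_M = \rho$. This produces the single scalar constraint
\begin{equation*}
\sum_{i=1}^{M} \left.\frac{\partial \AND[M](\rho_1, \dots, \rho_M)}{\partial \rho_i}\right|_{\rho_1, \dots, \rho_M = \rho} = 1
\end{equation*}
relating the $M$ partial derivatives at the diagonal.

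Next I would use commutativity to show that these $M$ partial derivatives are in fact all equal at the diagonal. Since $\AND[M]$ is invariant under any permutation of its arguments, the partial derivative with respect to $\rho_i$ at a point coincides with the partial derivative with respect to $\rho_j$ at the point obtained by swapping the $i$-th and $j$-th arguments; at the fully symmetric diagonal this swap leaves the point fixed, so the two partials agree there. Being all equal and summing to $1$, each must equal $1/M$, which is exactly the claimed identity. Since $1/M > 0$ for every $i$, the partials at the diagonal are strictly positive, and this is precisely the condition defining the shadow-lifting property (Property 4); hence that property holds as well.

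The step I expect to require the most care is the commutativity-to-equal-partials argument: one must verify that permutation-invariance of the function value transfers to equality of the individual first partial derivatives, that this equality is asserted only at the symmetric point, and that it genuinely relies on smoothness (continuity of the gradient) at the diagonal for the swap to be well posed. The remaining steps amount to a routine differentiation of the idempotence identity.
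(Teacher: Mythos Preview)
Your proposal is correct and follows essentially the same approach as the paper: differentiate the idempotence identity along the diagonal (the paper phrases this as a first-order expansion of $\AND[M](\rho+\epsilon,\dots,\rho+\epsilon)$, you phrase it via the chain rule applied to $g(\rho)=\AND[M](\rho,\dots,\rho)$) to obtain the sum-to-one constraint, then invoke commutativity at the symmetric point to force all partials equal, yielding $1/M$. Your write-up is slightly more explicit about why commutativity gives equality of the partials at the diagonal, but the underlying argument is identical.
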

\begin{proof}
	If the gradient is continuous, then for any $\rho \ne 0$ in the first-order approximation we must have:
	\begin{align*}
		\lim_{\epsilon \rightarrow 0} \AND[M](\rho + \epsilon, \dots, &\rho + \epsilon) = \AND[M](\rho, \dots, \rho) \\ &+ \sum\nolimits_{i=1}^{M}\epsilon \left.\frac{\partial \AND[M](\rho_1, \dots, \rho_M)}{\partial \rho_i}\right|_{\rho_1,\dots,\rho_M = \rho}
	\end{align*}
	The idempotence property then implies the equality:
	\begin{equation*}
		\rho + \epsilon = \rho + \epsilon\sum\nolimits_{i=1}^{M}\left.\dfrac{\partial \AND[M](\rho_1, \dots, \rho_M)}{\partial \rho_i}\right|_{\rho_1,\dots,\rho_M = \rho}
	\end{equation*}
	must hold. Furthermore, due to commutativity of the variables, the partial derivatives at $\rho_1 = \ldots = \rho_M$ must equal one another, which implies the desired result (\ref{eq:equalityPartial}).
\end{proof}

\begin{table}[t]
	\centering 
	\caption{Summary of the discussed properties satisfied by the two robustness metrics introduced in the literature.}
	\label{tab:satisfactions}
	\resizebox{0.78\linewidth}{!}{%
		\begin{tabular}{|l|c|c|c|c|c|c|}
			\hline
			\multicolumn{1}{|c|}{\multirow{2}{*}{Robustness metric}} & \multicolumn{6}{c|}{Property number} \\ \cline{2-7} 
			\multicolumn{1}{|c|}{}                                   & 1     & 2    & 3   & 4    & 5    & 6    \\ \hline
			traditional \cite{donze2010robust}                       & \cmark   & \cmark  & \xmark  & \xmark   & \cmark  & \cmark  \\ \hline
			AG \cite{mehdipour2019arithmetic}                        & \cmark   & \cmark  & \xmark  & \cmark  & \cmark  & \xmark   \\ \hline
		\end{tabular}%
	}
	\vspace{-3mm}
\end{table}

Next, we consider restrictions that weak smoothness imposes on the gradient at points where the minimal term in the conjunction is unique and equal to zero.

\begin{proposition} \label{theorem:partialD0}
	Assume the AND operator $\AND[M]$ is defined such that Properties 1-3 and 5 hold. Then,
	\begin{equation} \label{eq:partialD0}
	\left.\dfrac{\partial \AND[M](\rho_1, \dots, \rho_M)}{\partial \rho_i}\right|_{\rho_i = 0,\ \rho_{j \ne i} > 0} = 1, \ \forall i=1,\dots,M.
	\end{equation}
\end{proposition}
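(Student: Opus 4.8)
The plan is to reduce the statement to a one–dimensional argument in the single variable $\rho_i$, keeping the remaining arguments $\rho_{j \ne i} > 0$ fixed, and then to squeeze the partial derivative between two one-sided bounds that together force it to equal $1$. Write $f(\rho_i) = \AND[M](\rho_1, \dots, \rho_M)$ for this restriction. Since we evaluate near $\rho_i = 0$ with all other arguments strictly positive, $\rho_i$ is the \emph{unique} minimal term for $|\rho_i|$ small; by commutativity (Property 2) it suffices to treat one index $i$, as the argument applies verbatim to each coordinate. (Idempotence is not needed for this proof.)

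First I would pin down the value $f(0)$. Soundness (Property 1) gives $\AND[M] < 0$ whenever some argument is negative, so $f(\rho_i) < 0$ for $\rho_i < 0$; continuity (Property 3, part (i)) then yields $f(0) = \lim_{\rho_i \to 0^-} f(\rho_i) \le 0$. On the other hand, at $\rho_i = 0$ all arguments are nonnegative, so soundness gives $f(0) \ge 0$. Hence $f(0) = 0$.

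Next I would extract the two one-sided bounds from min/max boundedness (Property 5). For $0 < \rho_i < \min_{j \ne i}\rho_j$ the minimal term is $\rho_i$, so $f(\rho_i) \ge \rho_i$, and since $f(0) = 0$ the forward difference quotient obeys $f(\rho_i)/\rho_i \ge 1$; letting $\rho_i \to 0^+$ gives a right derivative of at least $1$. For $\rho_i < 0$ the minimal term is again $\rho_i$, so $\rho_i \le f(\rho_i)$, while soundness gives $f(\rho_i) < 0$; dividing the chain $\rho_i \le f(\rho_i) < 0$ by the negative number $\rho_i$ reverses the inequalities to give $0 < f(\rho_i)/\rho_i \le 1$, so the left derivative is at most $1$.

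Finally I would invoke weak smoothness (Property 3, part (ii)). At the point $\rho_i = 0$, $\rho_{j \ne i} > 0$ the minimal term is unique, so no two indices attain the minimum and the gradient — in particular the partial derivative with respect to $\rho_i$ — is continuous and therefore exists as a genuine two-sided derivative. Existence of this derivative forces the left and right one-sided limits to coincide; combined with the bounds $\ge 1$ (right) and $\le 1$ (left), the common value is pinned to exactly $1$, which is (\ref{eq:partialD0}). The main obstacle is precisely this coupling step: min/max boundedness alone only produces the two one-sided inequalities in opposite directions, and neither side by itself determines the derivative — it is only the differentiability guaranteed by the uniqueness of the minimal term under weak smoothness that ties them together to give the value $1$.
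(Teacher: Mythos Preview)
Your proof is correct and follows essentially the same route as the paper: establish $\AND[M]=0$ at the evaluation point via soundness plus continuity, use min/max boundedness to bound the right difference quotient from below by $1$ and the left quotient from above by $1$, and then invoke weak smoothness (unique minimal term) to force the two one-sided limits to agree. The only cosmetic difference is that you argue $f(0)=0$ by a two-sided squeeze, whereas the paper phrases it as a sign-change plus continuity; the substance is identical.
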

\begin{proof}
	Consider the case $i=1$ without loss of generality due to the commutative property of $\AND[M]$. As $\AND[M]$ is sound, it must switch from positive to negative as $\rho_1$ switches from positive to negative at points where $\rho_1 = 0, \rho_{j \ne 1} > 0$. Therefore, as $\AND[M]$ is continuous, we must have $\AND[M](\rho_1, \dots, \rho_M)$ = 0 at such points. The partial derivative $\left.\frac{\partial \AND[M](\rho_1, \dots, \rho_M)}{\partial \rho_i}\right|_{\rho_i = 0,\ \rho_{j \ne i} > 0}$ can thus be evaluated as:
	\begin{equation} \label{eq:gradLimitDef}
	\lim_{\rho_1 \rightarrow 0} \dfrac{\AND[M](\rho_1, \dots, \rho_M) - 0}{\rho_1 - 0} = \lim_{\rho_1 \rightarrow 0} \dfrac{\AND[M](\rho_1, \dots, \rho_M)}{\rho_1}
	\end{equation}
	For the limit from below, since $\rho_1$ is the minimal term, the min/max bound inequality $\rho_1 \le \AND[M](\rho_1, \dots, \rho_M) (< 0)$ implies $\lim_{\rho_1 \rightarrow 0-} \frac{\AND[M](\rho_1, \dots, \rho_M)}{\rho_1} \le 1$. On the other hand, taking the limit from above, the inequality $(0 <) \rho_1 \le \AND[M](\rho_1, \dots, \rho_M)$ implies $\lim_{\rho_1 \rightarrow 0+} \frac{\AND[M](\rho_1, \dots, \rho_M)}{\rho_1} \ge 1$. For weak smoothness, the two limits must match, thus the partial derivative (\ref{eq:partialD0}) must be equal to 1 at this point.
\end{proof}

Finally, the following proposition relates possible values $\AND[M]$ can take in case scale-invariance is also assumed.

\begin{proposition} \label{theorem:inftyBehavior}
	Assume the AND operator $\AND[M]$ is defined such that Properties 1-3, 5, and 6 hold, and consider without loss of generality $\rho_1 \ne 0$ fixed. Then, the condition (\ref{eq:partialD0}) on the partial derivative is satisfied if and only if:
	\begin{equation} \label{eq:partialDinf}
		\lim\limits_{\rho_2,\dots,\rho_M \rightarrow \infty} \AND[M]\left(\rho_1, \dots, \rho_M\right) = \rho_1.
	\end{equation}
\end{proposition}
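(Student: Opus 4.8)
The plan is to use scale-invariance (\ref{eq:scaleInvariance}) as the bridge between the local behaviour of $\AND[M]$ near a zero coordinate, captured by (\ref{eq:partialD0}), and its asymptotic behaviour as the remaining coordinates grow, captured by (\ref{eq:partialDinf}). First I would fix $\rho_1>0$ (the case $\rho_1<0$ being symmetric, handled by scaling with $\alpha=1/|\rho_1|$ and reading off the derivative from below) and normalise: applying (\ref{eq:scaleInvariance}) with $\alpha=1/\rho_1$ gives
\begin{equation*}
\AND[M](\rho_1,\rho_2,\dots,\rho_M) = \rho_1\,\AND[M]\!\left(1, \tfrac{\rho_2}{\rho_1},\dots,\tfrac{\rho_M}{\rho_1}\right).
\end{equation*}
Dividing by $\rho_1$ and substituting $t_j=\rho_j/\rho_1$ shows that (\ref{eq:partialDinf}) is equivalent to the normalised statement $\lim_{t_2,\dots,t_M\to\infty}\AND[M](1,t_2,\dots,t_M)=1$, so the whole proposition reduces to analysing this single homogeneous function.

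Next I would connect the partial derivative to the same normalised function. As in the proof of Proposition \ref{theorem:partialD0}, soundness (Property 1) together with continuity (Property 3) forces $\AND[M]=0$ on $\{\rho_1=0,\ \rho_{j\neq1}>0\}$, so the derivative in (\ref{eq:partialD0}) at $i=1$ is the one-sided quotient
\begin{equation*}
\left.\frac{\partial \AND[M]}{\partial \rho_1}\right|_{\rho_1=0} = \lim_{h\to0+}\frac{\AND[M](h,\rho_2,\dots,\rho_M)}{h} = \lim_{h\to0+}\AND[M]\!\left(1,\tfrac{\rho_2}{h},\dots,\tfrac{\rho_M}{h}\right),
\end{equation*}
where the last equality is again (\ref{eq:scaleInvariance}); since $\rho_2,\dots,\rho_M$ are fixed, the arguments tend to infinity along the ray they determine. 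Because the minimal coordinate is unique at such a point, weak smoothness (Property 3) guarantees the two-sided derivative exists and equals this limit. This immediately yields the direction (\ref{eq:partialDinf}) $\Rightarrow$ (\ref{eq:partialD0}): the full limit in (\ref{eq:partialDinf}) specialises to every ray, so the displayed limit is $1$ and hence the derivative equals $1$ for every choice of $\rho_2,\dots,\rho_M$.

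The harder direction is (\ref{eq:partialD0}) $\Rightarrow$ (\ref{eq:partialDinf}), and this is where I expect the real work. The derivative condition only delivers the limit of $\AND[M](1,t_2,\dots,t_M)$ along rays (fixed ratios of the $t_j$), whereas (\ref{eq:partialDinf}) demands the \emph{joint} limit, which also probes the corner directions where some coordinates dominate the others. To upgrade, I would use commutativity (Property 2) to order the coordinates as $1\le t_2\le\dots\le t_M$ and then sandwich: min/max boundedness (Property 5) gives the lower bound $\AND[M](1,t_2,\dots,t_M)\ge\min(1,t_2,\dots,t_M)=1$, while monotonicity of $\AND[M]$ in each argument gives the upper bound $\AND[M](1,t_2,\dots,t_M)\le\AND[M](1,t_M,\dots,t_M)$, whose right-hand side is an equal-coordinate ray limit and therefore tends to $1$ by the derivative condition just established. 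The two bounds squeeze the joint limit to $1$, giving (\ref{eq:partialDinf}). The main obstacle is precisely this passage from directional to joint convergence: it is clean for $M=2$, where only one coordinate grows and no corner directions arise, but for $M>2$ it rests on monotonicity, so I would either invoke monotonicity as an additional natural requirement on $\AND[M]$ or first verify that Properties 1--3, 5, and 6 already force $\AND[M]$ to be nondecreasing in each argument before completing the squeeze.
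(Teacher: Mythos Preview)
Your scaling/normalisation strategy is exactly the paper's: it too writes the scale-invariance identity
\[
\tfrac{\rho_1}{\epsilon}\,\AND[M](\epsilon,\bar\rho_2,\dots,\bar\rho_M)=\AND[M]\!\bigl(\rho_1,\bar\rho_2\tfrac{\rho_1}{\epsilon},\dots,\bar\rho_M\tfrac{\rho_1}{\epsilon}\bigr)
\]
and reads off each implication by letting $\epsilon\to0$. In particular your treatment of (\ref{eq:partialDinf}) $\Rightarrow$ (\ref{eq:partialD0}) coincides with the paper's ``if'' direction.

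Where you go beyond the paper is in flagging the ray-versus-joint limit issue for the converse. The paper simply passes from (\ref{eq:rho1Limit}) back through (\ref{eq:scaling}) to declare (\ref{eq:partialDinf}); read literally, that only yields convergence along rays $\rho_j=\bar\rho_j\rho_1/\epsilon$ with fixed ratios $\bar\rho_j$, so your instinct that something more is needed for $M>2$ is well placed. However, your proposed repair does not work. You invoke monotonicity of $\AND[M]$ in each argument to get the upper bound $\AND[M](1,t_2,\dots,t_M)\le\AND[M](1,t_M,\dots,t_M)$, but monotonicity is neither among the hypotheses nor a consequence of Properties 1--3, 5, 6. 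The Remark immediately following this proposition states that any $\AND[M]$ satisfying Properties 1--6 is \emph{non}-monotone, and the paper's own metric (\ref{eq:defSNC})---which does satisfy Properties 1--3, 5, 6---is explicitly non-monotone (e.g.\ $\AND[2]^{\text{new}}(1,\rho_2)$ rises above $1$ and then returns to $1$ as $\rho_2\to\infty$). Hence your fallback ``verify that Properties 1--3, 5, 6 already force $\AND[M]$ to be nondecreasing'' is false, and adding monotonicity as an extra hypothesis would exclude precisely the operators the proposition is meant to cover. Your squeeze therefore does not close; the joint-limit gap you correctly identified remains open in your argument (and, implicitly, in the paper's), except in the case $M=2$ where ray and joint limits coincide.
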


\begin{proof}
	The scale-invariance property allows us to relate the form of limits given by equations (\ref{eq:gradLimitDef}) and (\ref{eq:partialDinf}). Let $\epsilon$ have the same sign as $\rho_1 \ne 0$, and consider a set of values $\bar{\rho}_{i \ne 1} > 0$. Using (\ref{eq:scaleInvariance}) with $\alpha := \epsilon / \rho_1 > 0$, one obtains the relation:
	\begin{equation} \label{eq:scaling}
	\dfrac{\rho_1}{\epsilon} \AND[M](\epsilon, \bar{\rho}_2, \dots, \bar{\rho}_M) = \AND[M](\rho_1, \bar{\rho}_2\dfrac{\rho_1}{\epsilon}, \dots, \bar{\rho_M}\dfrac{\rho_1}{\epsilon}).
	\end{equation}
	For the `if' direction, assuming (\ref{eq:partialDinf}) holds, then for $\epsilon \rightarrow 0$ the right hand side of this equation becomes $\rho_1$. Dividing both sides by $\rho_1$ then yields:
	\begin{equation} \label{eq:rho1Limit}
		\lim\limits_{\epsilon \rightarrow 0} \dfrac{1}{\epsilon} \AND[M](\epsilon, \bar{\rho}_2, \dots, \bar{\rho}_M) = 1,
	\end{equation}
	which by definition of the partial derivative implies (\ref{eq:partialD0}). Conversely, for the `only if` direction, assuming (\ref{eq:partialD0}) holds, by definition (\ref{eq:rho1Limit}) must hold as well. Substituting this into (\ref{eq:scaling}) for $\epsilon \rightarrow 0$ shows that now (\ref{eq:partialDinf}) holds, as desired.
\end{proof}

\begin{remark}
Proposition \ref{theorem:inftyBehavior} and the shadow-lifting property imply that any AND operator satisfying Properties 1-6 altogether is non-monotone. This can be seen in the simple case of considering $\AND[2](-1, \rho)$ as $\rho$ increases from -1 towards infinity. Initially, the robustness of the conjunction must increase, but drop back to the minimal element -1 as $\rho \rightarrow \infty$.
\end{remark}

\section{NEW ROBUSTNESS METRIC} \label{section:newMetric}

We introduce a family of robustness metrics following the imposed restrictions uncovered in the previous section. More specifically, we construct an AND operator that satisfies all of Properties 1-6. The construction is such that a parameter $\nu > 0$ controls how closely the defined metric approaches the traditional $\AND^{\text{trad}}(\rho_1, \dots, \rho_M) = \min_i \rho_i := \rho_{\min}$ operator.

\subsection{Behavior for $\rho_{\min} < 0$}

If $\rho_{\min} < 0$, the conjunction is not satisfied, hence for soundness and continuity $\AND[M](\rho_1,\dots,\rho_M)$ must remain negative and approach $0$ as $\rho_{\min} \rightarrow 0-$. Aiming at a scale-invariant behavior, let us define:
\begin{equation} \label{eq:normalized_rho}
\tilde{\rho}_i := \dfrac{\rho_i - \rho_{\min}}{\rho_{\min}},
\end{equation}
which is non-positive (as $\rho_{\min} < 0$) and becomes $0$ at $\rho_i = \rho_{\min}$. Using this normalized measure, we further define the effective robustness measures:
\begin{equation} \label{eq:effectiveDef}
\rho^{\text{eff}}_i := \rho_{\min} e^{\tilde{\rho}_i},
\end{equation}
whose purpose is to transform each $\rho_i$ such that $\rho^{\text{eff}}_i$ is negative and remains between $\rho_{\min}$ and $\rho_i$. The AND operator for $\rho_{\min} < 0$ with parameter $\nu > 0$ is then defined by taking the weighted average of these effective measures:
\begin{equation} \label{eq:defNeg}
\AND[M]^{\text{new}}(\rho_1, \dots, \rho_M) := \dfrac{\sum_{i=1}^M\rho^{\text{eff}}_i e^{\nu \tilde{\rho}_i}}{\sum_{i=1}^M e^{\nu \tilde{\rho}_i}}, \ \text{if } \rho_{\min} < 0.
\end{equation}
The weighting function $e^{\nu \tilde{\rho}_i}$ is such that the weight is $1$ for $\rho_i = \rho_{\min}$ and becomes $0$ as $\rho_i \rightarrow \infty$; this property is motivated by the desire to satisfy the condition $\lim\limits_{\rho_i \ne \rho_{\min} \rightarrow \infty} \AND[M]\left(\rho_1, \dots, \rho_M\right) = \rho_{\min}$ imposed by Proposition \ref{theorem:inftyBehavior}. Also note that as $\nu \rightarrow \infty$, the weights $e^{\nu \tilde{\rho}_i}$ tend to 0 for $\rho_i \ne \rho_{\min}$ because then $\tilde{\rho}_i < 0$, and the defined operator becomes the traditional $\AND[M]^{\text{trad}} = \min_i \rho_i$ metric.

\subsection{Behavior for $\rho_{\min} > 0$}

If $\rho_{\min} > 0$, the conjunction is satisfied, hence for soundness and continuity $\AND[M](\rho_1, \dots, \rho_M)$  must remain positive and approach $0$ as $\rho_{\min} \rightarrow 0+$. For a structurally scale-invariant behavior, we again use the normalized measure defined by (\ref{eq:normalized_rho}). As opposed to the previous case, we do not require an effective measure in order to impose negative values on the metric, and can readily take the weighted average of the robustnesses forming the conjunction as:
\begin{equation} \label{eq:defPos}
\AND[M]^{\text{new}}(\rho_1, \dots, \rho_M):= \dfrac{\sum_i \rho_i e^{-\nu \tilde{\rho}_i}}{\sum_i e^{-\nu \tilde{\rho}_i}}, \ \text{if } \rho_{\min} > 0.
\end{equation}
Note that if $\rho_{\min} > 0$, the normalized measures are all positive, and the exponential weights become $1$ for $\rho_i = \rho_{\min}$ and $0$ as $\rho_i \rightarrow \infty$, as in the previous case. As before, setting $\nu \rightarrow \infty$ also reduces (\ref{eq:defPos}) to the traditional $\min$ operator.

\subsection{Definition}
The new robustness measure is defined from (\ref{eq:defNeg}) and (\ref{eq:defPos}) as:
\begin{equation}\label{eq:defSNC}
	\AND[M]^{\text{new}}(\rho_1, \dots, \rho_M) = \begin{cases}
		\dfrac{\sum_i \rho_{\min} e^{\tilde{\rho}_i} e^{\nu \tilde{\rho}_i}}{\sum_i e^{\nu \tilde{\rho}_i}} \quad \text{if } \rho_{\min}  < 0, \\
		\dfrac{\sum_i \rho_i e^{-\nu \tilde{\rho}_i}}{\sum_i e^{-\nu \tilde{\rho}_i}} \qquad \ \ \text{if } \rho_{\min}  > 0, \\
		0 \qquad \qquad \qquad \ \ \ \, \text{if }\rho_{\min} = 0.
	\end{cases}
\end{equation}
Sample behaviors of the operator $\AND[2]^{\text{new}}(\rho_1, \rho_2)$ for different configurations of the variables $\rho_1$ and $\rho_2$ are depicted in Figure \ref{fig:newcurves}. Note that the illustrated curves are smooth, unlike in case of the traditional and AG robustness measures. However, monotonicity is not achieved, as expected from Theorem \ref{theorem:inftyBehavior}.

\begin{figure}[hb]
	\centering
	\begin{tikzpicture} \hspace{-6mm}
	\captionsetup{format=plain,labelformat=parens,labelsep=space}
	\setcounter{subfigure}{0}
	\begin{groupplot}[group style={group name=myplots,group size= 2 by 1,horizontal sep=6mm},width=0.5\linewidth]
	\nextgroupplot[width=0.6\linewidth,grid=both,xmin=-1.1,ymin=-1.1,ymax=-0.42,xmax=0.1,xlabel=$\rho_2$,ylabel={$\AND(\rho_1,\rho_2)$},axis x line=bottom, axis y line = left,ylabel near ticks,ytick = {-1,-0.5},ylabel style={font=\small},ylabel shift = -15pt,minor x tick num=1, minor y tick num = 4,grid style={line width=.2pt, draw=gray!50}, major grid style={line width=.4pt,draw=gray!80},yscale=1.6,xscale=1.0,tick label style={font=\scriptsize}]
	\addplot[domain=-1.1:0.1, samples=100, lightblue, very thick] (x,{min(x,-1)}); \label{plots:min}
	\addplot[domain=-1.1:0.1, samples=100, lightred, very thick] (x,{(min(x,0)-1)/2}); \label{plots:AG}
	\addplot[domain=-1.1:-1.0, samples=10, cyan, very thick] (x,{x*(1 + exp(2*(-1-x)/x))/(1 + exp(1*(-1-x)/x))}); \label{plots:nu1}
	\addplot[domain=-1.0:0.1, samples=80, cyan, very thick] (x,{-(1 + exp(-2*(x+1)))/(1 + exp(-1*(1+x)))});
	\addplot[domain=-1.1:-1.0, samples=10, darkgreen, very thick] (x,{x*(1 + exp(4*(-1-x)/x))/(1 + exp(3*(-1-x)/x))}); \label{plots:nu3}
	\addplot[domain=-1.0:0.1, samples=80, darkgreen, very thick] (x,{-(1 + exp(-4*(x+1)))/(1 + exp(-3*(1+x)))});
	
	\nextgroupplot[width=0.6\linewidth,grid=both,xmin=-0.2,ymin=-0.2,ymax=1.16,xmax=1.2,xlabel=$\rho_2$,axis x line=bottom, axis y line = left, minor x tick num=1, minor y tick num = 4,ytick = {0,1.0},grid style={line width=.2pt, draw=gray!50}, major grid style={line width=.4pt,draw=gray!80},yscale=1.6,xscale=1.0,tick label style={font=\scriptsize}]
	\addplot[domain=-0.2:1.2, samples=100, lightblue, very thick] (x,{min(x,1)});
	\addplot[domain=0.0:0.42, samples=20, lightred, very thick, dotted] (0,x);
	\addplot[domain=-0.2:0.0, samples=100, lightred, very thick] (x,{x/2});
	\addplot[domain=0.0:1.2, samples=100, lightred, very thick] (x,{sqrt(2*(1+x))-1});
	\addplot[domain=-0.2:-0.01, samples=10, cyan, very thick] (x,{x*(1 + exp(2*(1-x)/x))/(1 + exp(1*(1-x)/x))});
	\addplot[domain=-0.01:0.01, samples=2, cyan, very thick] (x,x);
	\addplot[domain=0.01:1.0, samples=90, cyan, very thick] (x,{(x + exp(-1*(1-x)/x))/(1 + exp(-1*(1-x)/x))});
	\addplot[domain=1.0:1.2, samples=80, cyan, very thick] (x,{(1 + x*exp(-1*(x-1)))/(1 + exp(-1*(x-1)))});
	\addplot[domain=-0.2:-0.01, samples=10, darkgreen, very thick] (x,{x*(1 + exp(4*(1-x)/x))/(1 + exp(3*(1-x)/x))});
	\addplot[domain=-0.01:0.01, samples=2, darkgreen, very thick] (x,x);
	\addplot[domain=0.01:1.0, samples=90, darkgreen, very thick] (x,{(x + exp(-3*(1-x)/x))/(1 + exp(-3*(1-x)/x))});
	\addplot[domain=1.0:1.2, samples=80, darkgreen, very thick] (x,{(1 + x*exp(-3*(x-1)))/(1 + exp(-3*(x-1)))});
	\end{groupplot}
	\node[text width=6cm,align=center,anchor=north] at ([yshift=-8mm]myplots c1r1.south) {\captionof{subfigure}{$\rho_1 = -1$ kept fixed}};
	\node[text width=6cm,align=center,anchor=north] at ([yshift=-8mm]myplots c2r1.south) {\captionof{subfigure}{$\rho_1 = 1$ kept fixed}};
	\path (myplots c1r1.north west|-current bounding box.north)--
	coordinate(legendpos)
	(myplots c2r1.north east|-current bounding box.north);
	\matrix[
	matrix of nodes,
	anchor=south,
	draw,
	inner sep=0.1em,
	draw
	]at([yshift=1ex, xshift=-2mm]legendpos)
	{
		\ref{plots:min}& \scriptsize traditional &[3pt]
		\ref{plots:AG}& \scriptsize AG &[3pt]
		\ref{plots:nu1}& \scriptsize new ($\nu = 1$) &[3pt]
		\ref{plots:nu3}& \scriptsize new ($\nu = 3$) \\};
	\end{tikzpicture}
	\caption{Robustness metric of the conjunction $\AND[2](\rho_1, \rho_2)$ for the traditional, AG, and newly proposed metrics as a function of $\rho_2$ for fixed values of $\rho_1$. On the left, the shadow-lifting property is illustrated as the robustness increases even though the minimum of the two terms is constant. Note that the curves corresponding to the proposed metric are smooth.}
	\label{fig:newcurves}
\end{figure}

The introduced robustness metric satisfies each of the desired Properties 1-6, as shown in the following theorem.

\begin{theorem}
	The AND operator defined by (\ref{eq:defSNC}) satisfies all of Properties 1-6.
\end{theorem}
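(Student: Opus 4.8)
The plan is to check the six properties in increasing order of difficulty, leaving weak smoothness (Property~3) for last since it is the only substantial one. First I would dispose of scale-invariance (Property~6) and min/max boundedness (Property~5). The key observation for both is that the normalized measures $\tilde{\rho}_i=(\rho_i-\rho_{\min})/\rho_{\min}$ are invariant under $\rho_i\mapsto\alpha\rho_i$ for $\alpha>0$, so every exponential weight $e^{\pm\nu\tilde{\rho}_i}$ is unchanged while $\rho_{\min}$ and $\rho^{\text{eff}}_i=\rho_{\min}e^{\tilde{\rho}_i}$ scale by $\alpha$ and the sign of $\rho_{\min}$ is preserved; hence each branch of~(\ref{eq:defSNC}) is positively homogeneous of degree one, and the $\alpha=0$ case is idempotence at $\rho=0$. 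For Property~5, in the $\rho_{\min}>0$ branch $\AND[M]^{\text{new}}$ is literally a weighted average of the $\rho_i$ and so lies in $[\rho_{\min},\max_i\rho_i]$; in the $\rho_{\min}<0$ branch it is a weighted average of the $\rho^{\text{eff}}_i$, so it suffices to show $\rho_{\min}\le\rho^{\text{eff}}_i\le\rho_i$, which (setting $x=\rho_i/\rho_{\min}\le1$ and dividing by $\rho_{\min}<0$) reduces to the tangent-line inequality $e^{x-1}\ge x$.

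Next I would handle the three properties that follow directly from the branch structure. Commutativity is immediate from the symmetry of the sums in~(\ref{eq:defSNC}), and idempotence follows by substituting $\rho_i=\rho$ for all $i$: then $\tilde{\rho}_i=0$, every weight equals $1$, and each branch collapses to $\rho$. Soundness (Property~1) then reads off the cases together with Property~5: the conjunction holds exactly when $\rho_{\min}\ge0$, the $\rho_{\min}>0$ branch is a positive average, the $\rho_{\min}<0$ branch is an average of strictly negative effective measures, and the value is $0$ iff $\rho_{\min}=0$. For the shadow-lifting property (Property~4) I would evaluate $\partial\AND[M]^{\text{new}}/\partial\rho_i$ at an all-equal point $\rho_1=\dots=\rho_M=\rho\ne0$ by perturbing a single coordinate and taking one-sided limits; although the identity of the minimizing index switches across the two sides (so the explicit expression differs), a short computation shows both one-sided derivatives equal $1/M>0$, whence the partial derivative exists and is positive. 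Equivalently, once smoothness at equal points is known one may simply invoke Proposition~2.

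The main obstacle is weak smoothness (Property~3). Away from the surface $\{\rho_{\min}=0\}$ there is little to do: wherever the minimal term is unique, $\rho_{\min}$ coincides with that coordinate and is smooth, the denominators $\sum_i e^{\pm\nu\tilde{\rho}_i}$ are strictly positive, and $\AND[M]^{\text{new}}$ is a composition of smooth functions, so its gradient is continuous; at points where the minimum is attained more than once but $\rho_{\min}\ne0$ the definition of weak smoothness imposes no requirement. For continuity on $\{\rho_{\min}=0\}$ I would use that in each branch the weights satisfy $e^{\pm\nu\tilde{\rho}_i}\le1$ (since $\tilde{\rho}_i$ carries the appropriate sign) with the minimizing index carrying weight exactly $1$, so the denominator is at least $1$; as such a point is approached every numerator term vanishes --- in the positive branch because the non-minimal weights decay while the minimal value tends to $0$, and in the negative branch because $|\rho^{\text{eff}}_i|\le|\rho_{\min}|\to0$ --- so $\AND[M]^{\text{new}}\to0$, matching the prescribed value and covering both the unique- and repeated-minimizer cases.

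The genuinely delicate step, and the one I expect to be the crux, is gradient continuity across $\{\rho_{\min}=0\}$ at a point with a unique minimizer (say $\rho_k=0$, $\rho_{j\ne k}>0$) --- precisely the sign-switch points the metric was engineered to smooth, and where the branch of~(\ref{eq:defSNC}) genuinely changes. Here I would compute the one-sided partial derivatives from each branch. The decisive feature is the factor $e^{-\nu\rho_j/|\rho_{\min}|}$ carried by every non-minimal weight in both branches: as $\rho_{\min}\to0$ this factor, together with all of its derivatives (which only introduce polynomial factors in $1/\rho_{\min}$), vanishes faster than any power. Consequently the partial derivative in the minimizing slot tends to $1$ and every other partial derivative tends to $0$ from both sides, and these one-sided limits agree with the gradient at the boundary point itself, so $\nabla\AND[M]^{\text{new}}$ extends continuously across the surface. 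Establishing this exponential-dominance estimate uniformly --- that the non-minimal terms and all their first derivatives are negligible as $\rho_{\min}\to0$ in both branches at once --- is the real content of the proof.
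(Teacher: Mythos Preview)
Your proposal is correct and follows essentially the same route as the paper: scale-invariance via invariance of the $\tilde{\rho}_i$, min/max boundedness via the tangent-line (Bernoulli) inequality $e^{x-1}\ge x$ applied to $\rho^{\text{eff}}_i$, idempotence/commutativity/soundness by direct inspection of the branches, Property~4 by computing the two one-sided derivatives at an equal point to get $1/M$, and weak smoothness by (i) composition-of-smooth-functions away from $\rho_{\min}=0$, (ii) squeezing $\AND[M]^{\text{new}}\to0$ as $\rho_{\min}\to0$, and (iii) matching the one-sided partials across $\rho_{\min}=0$ at a unique-minimizer point to $(1,0,\dots,0)$ using the exponential decay of the non-minimal weights. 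The paper, constrained by space, only writes out the $\rho_{\min}<0$ branch and gives the explicit partial-derivative formula rather than your ``exponential dominance'' phrasing, but the argument is the same.
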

\begin{proof}
	Due to space constraints, we prove the theorem for the definition of $\AND[M]^{\text{new}}$ in case $\min_i \rho_i  < 0$; the proof for $\min_i \rho_i > 0$ follows the same pattern.
	
	\textit{Properties 1-2:} For soundness, we must have $\AND[M]^{\text{new}}(\cdot) < 0$, which follows from $\rho_{\min} = \min_i \rho_i  < 0$ and the positiveness of all exponential factors in the definition (\ref{eq:defNeg}) of the operator. For idempotence, substituting in $\rho_1 = \dots \rho_M := \rho$ implies $\rho_{\min} = \rho$, and for all normalized metrics we have $\tilde{\rho}_i = 0$. The expression (\ref{eq:defNeg}) therefore reduces to $\sum_{i=1}^{M} \rho / \sum_{i=1}^{M} 1 = \rho$, as desired. Commutativity follows from the commutativity of $\rho_{\min} = \min_i \rho_i$ and from commutativity of addition.

	\textit{Property 3:} For weak smoothness, we first show that the operator is continuous. This is clearly the case due to the continuity of the composing $\min$ and exponential functions when $\rho_{\min} < 0$; furthermore, we must show $\AND[M]^{\text{new}}(\rho_1,\dots,\rho_M) \rightarrow 0$ as $\rho_{\min} \rightarrow 0-$, since by definition (\ref{eq:defSNC}) $\AND[M]^{\text{new}}(\rho_1,\dots,\rho_M) = 0$ when $\rho_{\min} = 0$. To show this, note that the effective robustness measures (\ref{eq:effectiveDef}) are defined such that $\rho_{\min} \le \rho^{\text{eff}}_i < 0$ $\forall i$, hence the weighted average (\ref{eq:defNeg}) of them will also satisfy these bounds. Thus, when $\rho_{\min} \rightarrow 0-$, $\AND[M]^{\text{new}}(\rho_1,\dots,\rho_M) \rightarrow 0$, as desired.
	
	It is clear that the gradient of (\ref{eq:defNeg}) is smooth whenever there is a unique index $i$ such that $\rho_i = \rho_{\min} < 0$ due to the smoothness of the composing functions. To complete the proof, we need to show that as $\rho_{\min} \rightarrow 0$, the left and right partial derivatives with respect to each $\rho_i$ become equal. Proposition \ref{theorem:partialD0} implies that the partial derivative with respect to the minimal term needs to be equal to 1, which will now be shown for the left side derivative. Indeed, let $i$ be the unique index for which $\rho_i = \min_i \rho_i$; then as $\rho_{\min}  \rightarrow 0-$, all other $\rho_{j \ne i} > 0$ and thus $\tilde{\rho}_{j \ne i} \rightarrow -\infty$ while $\tilde{\rho}_i = 0$. Substituted into the expression (\ref{eq:defNeg}), $\AND[M]^{\text{new}}(\rho_1,\dots,\rho_M)$ thus reduces to $\rho_{\min} = \rho_i$, and the partial derivative is::
	\begin{align*}
		\lim\limits_{\rho_i \rightarrow 0, \rho_{j \ne i} > 0} &\dfrac{\AND[M]^{\text{new}}(\rho_1,\dots,\rho_M) - \AND[M]^{\text{new}}(0,\rho_2,\dots,\rho_M) }{\rho_i - 0} = \\ &= \dfrac{\rho_i - 0}{\rho_i} = 1,
	\end{align*}
	as desired. Furthermore, the partial derivative with respect to the other variables must approach 0 as $\rho_{\min} \rightarrow 0-$, because $\AND[M](\cdot) \equiv 0$ at $\rho_{\min} = 0$. This can be shown by deriving the expression for these partial derivatives while treating $\rho_i = \rho_{\min}$ as a constant. Indeed, with respect to a variable $\rho_j$, $j \ne i$, one obtains:
	\begin{equation}
	\dfrac{\partial \AND[M](\cdot)}{\partial \rho_j} = e^{(1+\nu)\tilde{\rho}_j}\dfrac{(1 + \nu)(1 + \sum_{k \ne i} e^{\nu \tilde{\rho}_k}) - \nu e^{\nu\tilde{\rho}_j}}{\left(1 + \sum_{k \ne i} e^{\nu \tilde{\rho}_k}\right)^2}
	\end{equation}
	In the limit $\rho_{\min} \rightarrow 0-$, all $\tilde{\rho}_k \rightarrow -\infty$ for any $k \ne i$, and thus the above expression becomes 0. This completes the proof of weak smoothness.
	
	\textit{Property 4:} For the shadow-lifting property, we show that when $\rho_1,\dots,\rho_M := \rho \ne 0$, the partial derivative with respect to any $\rho_i$ becomes $1/M > 0$. Due to commutativity, without loss of generality consider limits as $\rho_1 \rightarrow \rho$, first from above with $\rho_1 = \rho + \epsilon$ and $\epsilon \rightarrow 0+$. Then $\rho_{\min} = \rho$ and each $\tilde{\rho}_{j \ne 1} = 0$, thus:
	\begin{align*}
		\lim_{\epsilon \rightarrow 0+} &\dfrac{1}{\epsilon}\left[\AND[M](\rho + \epsilon,\rho,\dots,\rho) -  \AND[M](\rho,\dots,\rho)\right] \\ &= \lim_{\epsilon \rightarrow 0+}\dfrac{\rho}{\epsilon} \dfrac{(M-1) + e^{(1+\nu)\frac{\rho + \epsilon- \rho}{\rho}}}{(M-1) + e^{\nu\frac{\rho + \epsilon - \rho}{\rho}}} - \dfrac{\rho}{\epsilon} \\
		&= \lim_{\epsilon \rightarrow 0+}\dfrac{\rho}{\epsilon} \dfrac{e^{(1+\nu)\frac{\epsilon}{\rho}} - e^{\nu\frac{\epsilon}{\rho}}}{(M-1) + e^{\nu\frac{\epsilon}{\rho}}} \\
		&= \lim_{\epsilon \rightarrow 0+} \dfrac{1}{(M-1)+e^{\nu\frac{\epsilon}{\rho}}} \lim_{\epsilon \rightarrow 0+} e^{\nu\frac{\epsilon}{\rho}} \lim_{\epsilon \rightarrow 0+} \dfrac{e^\frac{\epsilon}{\rho} - 1}{\epsilon / \rho} \\
		&= \tfrac{1}{M} \cdot 1 \cdot 1 = \tfrac{1}{M}.
	\end{align*} 
	For the limit from below, with $\rho_1 = \rho - \epsilon$ and $\epsilon \rightarrow 0+$, we have $\rho_{\min} = \rho - \epsilon$ and thus following the definition (\ref{eq:defNeg}):
	 \begin{align*}
		 \lim_{\epsilon \rightarrow 0+} &\dfrac{1}{\epsilon}\left[\AND[M](\rho - \epsilon,\rho,\dots,\rho) -  \AND[M](\rho,\dots,\rho)\right] \\ &= \lim_{\epsilon \rightarrow 0+}\dfrac{\rho - \epsilon}{\epsilon} \dfrac{1 + (M-1)e^{(1+\nu)\frac{\rho - (\rho - \epsilon)}{\rho - \epsilon}}}{1 + (M-1)e^{\nu\frac{\rho - (\rho - \epsilon)}{\rho - \epsilon}}} - \dfrac{\rho}{\epsilon}. 
	 \end{align*} 
	 Following the same steps as previously, this limit can also be shown to be $\frac{1}{M}$. Since the two limits equal, the partial derivative exists and is equal to $1/M$, as desired.
	 
	 \textit{Property 5:} For the boundedness of $\AND[M](\rho_1, \dots, \rho_M)$ by $\min_i \rho_i$ and $\max_i \rho_i$, note that (\ref{eq:defNeg}) takes the weighted average of the terms $\rho^{\text{eff}}_i = \rho_{\min} e^{\frac{\rho_i - \rho_{\min}}{\rho_{\min}}}$. As all $\rho_i > \rho_{\min}$ and $\rho_{\min} < 0$, the exponent is negative and thus each $\rho^{\text{eff}}_i \ge \rho_{\min}$. Furthermore, by the Bernoulli inequality $e^x \ge 1 + x$:
	 \begin{equation*}
	 \rho^{\text{eff}}_i = \rho_{\min} e^{\frac{\rho_i - \rho_{\min}}{\rho_{\min}}} \le \rho_{\min} (1 + \tfrac{\rho_i - \rho_{\min}}{\rho_{\min}}) = \rho_i \le \max\nolimits_j \rho_j.
	 \end{equation*} 
	 The weighted average of the terms $\rho^{\text{eff}}_i$ in (\ref{eq:defNeg}) must also adhere to the bounds imposed on them, as was to be shown.
	 
	 \textit{Property 6:} Scale-invariance readily follows by substitution. For any $\rho_i \rightarrow \alpha \rho_i$ transformation with $\alpha > 0$, the normalized measures $\tilde{\rho}_i$ remain constant, and the minimal term becomes $\rho_{\min}\rightarrow \alpha \rho_{\min}$. Since the expression (\ref{eq:defNeg}) is linear in $\rho_{\min}$ (when treating $\tilde{\rho}_i$ independently), the desired scale-invariance property follows. 	
\end{proof}

\section{RESULTS} \label{section:results}

In this section, we compare the performance of various robustness measures in the context of learning a simple but instructive task. The task $\phi$ is for a single integrator robot $\x<\dot> = \u$, with $\norm[2]{\u} \le 1$, to \textit{always eventually} visit two nearby regions every $4$s until the time horizon $T = 10$s. The two goal regions are circular with radius $r_g = 0.2$ and are centered at $\x[g1] = [1.5\ 2.5]\tp$ and $\x[g2] = [2.5\ 1.5]\tp$. The robot itself starts from $\x[0] = [2.0\ 2.0]\tp$ and has an input constraint $\norm[2]{\u} \le 1$. We aim for a robustness of at least $\rho^{\phi} \ge 0.05$, as well as to minimize $C(\tau) = \int_{0}^{T} \norm[2]{\u(t)}^2 \diff t$. The scenario is simulated for $T = 10$s with a time step $\Delta t = 0.02$s.

The task is formulated as $\phi = \always[0][6]\left(\eventually[0][4] \mu_1 \and \eventually[0][4] \mu_2 \right)$, where $\mu_1 = (r_g - \norm{\x - \x[g1]} > 0)$ and $\mu_2 =  (r_g - \norm{\x - \x[g2]} > 0)$. The minimal robustness requirement implies that the distance to travel from one region to the other is 1.11, slightly more than what is physically possible for the robot in one second. For this reason, out of two potential solutions illustrated in Figure \ref{fig:setup}a, only the green one is feasible, and its cost can be calculated to be $C_{\text{opt}} = 2.02$.

We employ the guided \ac{PI2} method from Section \ref{section:PI2} to solve the described scenario. The algorithm allows the definition of initial $\gamma_i(t)$ funnels to guide the exploration towards the optimal solution. In particular, we examine three imposed guides for reaching the target regions, as depicted in Figure \ref{fig:setup}b. The three cases are referred to as \textit{strong}, \textit{weak}, and \textit{no guidance}, respectively. The figure shows the $\gamma_1(t)$ guides for reaching the goal region 1; the funnels for the second are defined from $\frac{\gamma_1(t) + \gamma_2(t)}{2} := \gamma_1(0)$ as $\gamma_2(t) = 2\gamma_1(0) - \gamma_1(t)$.

We compare the traditional, AG, and newly proposed robustness metrics in terms of their performance for guiding learning in context of the outlined scenario. The goal is to achieve task satisfaction as quickly and consistently as possible. To this end, the case scenario is solved 25 times using \ac{PI2} for the different configurations of guidance funnels and robustness metrics. Table II summarizes the percentage of finding feasible solutions as opposed to infeasible ones in each case. Figures 3-4 show the convergence of the task robustness measure $\rho^\phi$ and the achieved cost $C(\tau)$ as a function of the \ac{PI2} iteration number for the case of the successful runs.

Examining the figures, it is clear that the newly defined robustness measure surpasses both the traditional and AG metric in terms of accelerating the learning process. Surprisingly, the latter behaves quite poorly, most likely due to the learning method not being suited for handling discontinuous costs well. It is important to note here that the metrics and thus costs are not comparable in terms of their values, but in terms of when the task becomes satisfied due to the soundness property. The new metric consistently achieves task satisfaction 20-25\% faster than the traditional metric.

It is also worth discussing the achieved results in terms of which local minima was found by the algorithm, as summarized by Table II. In case of strong guidance, all 3 robustness metrics mainly converge to the true solution, although the AG metric sometimes still snaps into an infeasible one even in this case. With no guidance, the traditional metric is clearly superior in terms of finding the correct solution, although it still converges slower than the new metric in case the latter finds it as well. With a minimal weak initial nudge in the optimal direction, both the traditional and new metrics converge to the true optimum in $100\%$ of the runs, whereas the AG metric does not exhibit such a trend; in fact, the opposite seems to be the case. We conjecture that this behavior is caused by the AG metric being too rewarding for increases in robustness metrics, as seen through Figure \ref{fig:newcurves}. Conversely, the new metric rewards small increases of terms in a conjunction more than very large ones, leading to an arguably more desired convergence behavior of all the terms rising more or less together. We emphasize that the AG metric was designed to be rewarding in order to achieve higher robustness against noise than the traditional metric.

\begin{table}[t]
	\centering
	\caption{Percentage of \ac{PI2} runs for which feasible solutions were found for the case scenario. The results reflect 25 randomized runs for each configuration of robustness measure and guidance used to solve the problem.}
	\label{tab:results}
	
	\resizebox{0.7\linewidth}{!}{%
		\begin{tabular}{|l|c|c|c|}
			\hline
			\multicolumn{1}{|c|}{\multirow{2}{*}{Robustness metric}} & \multicolumn{3}{c|}{Feedback guidance} \\ \cline{2-4} 
			\multicolumn{1}{|c|}{}                                   & none       & weak        & strong      \\ \hline
			traditional \cite{donze2010robust}                       & 84\%       & 100\%       & 100\%       \\ \hline
			AG \cite{mehdipour2019arithmetic}                        & 48\%       & 28\%        & 92\%        \\ \hline
			new                                                      & 60\%       & 100\%       & 100\%       \\ \hline
		\end{tabular}%
	}\vspace{-4mm}
\end{table}

A final observation is that the proposed metric is able to achieve improved convergence rates without deviating much from the traditional robustness metric. This is seen from Figure \ref{fig:newcurves}; the parameter $\nu = 3.0$ was used in the simulation examples. Thus, the new metric expresses a similar robustness interpretation as the traditional metric. This is opposed to the AG metric, which expresses the (not necessarily true) desire to reach and stay at positive robustness states faster and for a longer period of time. These findings emphasize an additional possibility of the newly defined metric: $\nu$ allows the user to control how close it approximates the traditional metric. Varying its value throughout the learning procedure may lead to further improved results.

\begin{figure}[h]
	\begin{tikzpicture}%
	\captionsetup{format=plain,labelformat=parens,labelsep=space}
	\setcounter{subfigure}{0}
	\begin{groupplot}[group style={group name=myplots,group size= 1 by 2, vertical sep = 3mm},width=0.5\linewidth]
		
		\nextgroupplot[grid=both,width=0.93\linewidth,xmin=0,ymin=-0.6,ymax=0.6,xmax=10,xlabel=$t$,legend pos=north west,axis x line=bottom, axis y line = left,ylabel near ticks,ylabel style={rotate=-90},minor x tick num=1, minor y tick num = 1, enlarge x limits={abs=0.05}, enlarge y limits={abs=0.2},grid style={line width=.2pt, draw=gray!50}, major grid style={line width=.4pt,draw=gray!80},yscale=0.6,legend style={font=\small}]
		
		\addplot [name path=upper,draw=none] coordinates {(0,0.707) (10,0.707)};
		\addplot [name path=lower,draw=none] coordinates {(0,0.507) (10,0.507)};
		\addplot [fill=green, opacity=0.3] fill between[of=upper and lower];	
		\addplot [name path=lower,draw=none] coordinates {(0,-0.707) (10,-0.707)};
		\addplot [name path=upper,draw=none] coordinates {(0,-0.507) (10,-0.507)};
		\addplot [fill=green, opacity=0.3] fill between[of=upper and lower];		
		
		\addplot [very thick, green!70!black, mark=.] coordinates {(0,0) (2,0.557) (4,-0.557) (6,0.557) (8,-0.557) (10,-0.557)};
		\addplot [very thick, red, mark=.] coordinates {(0,0) (3,0.557) (4,-0.557) (6,-0.557) (7,0.557) (10,0.557)};
		
		\node at (axis cs:5,0.6) {\color{green!50!black} \textbf{goal 1}};
		\node at (axis cs:5,-0.67) {\color{green!50!black} \textbf{goal 2}};
		
		\nextgroupplot[grid=both,width=0.93\linewidth,xmin=0,ymin=-2.0,ymax=0.1,xmax=10,xlabel=$t$,ylabel=$\rho^{\mu_1}$,legend pos=north west,axis x line=bottom, axis y line = left,ylabel near ticks,ylabel style={rotate=-90},minor x tick num=1, minor y tick num = 1, enlarge y limits={abs=0.2},grid style={line width=.2pt, draw=gray!50}, major grid style={line width=.4pt,draw=gray!80},yscale=0.6,legend style={font=\small}, ylabel shift=-5mm]
		
		\addplot [name path=upper,draw=none] coordinates {(0,0.2) (10,0.2)};
		\addplot [name path=lower,draw=none] coordinates {(0,0) (10,0)};
		\addplot [fill=green, opacity=0.3] fill between[of=upper and lower];	
		\addplot [very thick, green!70!black, mark=.] coordinates {(0,-0.507) (2,0.05) (4,-1.064) (6, 0.05) (8,-1.064) (10,-1.064)};
		\addplot [ultra thick, blue!80!black, mark=.] coordinates {(0,-0.75) (2,-0.3) (4,-1.2) (6,-0.3) (8,-1.2) (10,-1.2)};
		\addplot [ultra thick, blue!50!black, mark=.] coordinates {(0,-1.5) (2,-1.3) (4,-1.7) (6,-1.3) (8,-1.7) (10,-1.7)};
		\addplot [ultra thick, blue!20!black, mark=.] coordinates {(0,-2) (10,-2)};
		
		\draw [thick,blue!60!black,->] (axis cs: 0.1,-0.8) to[bend left] (axis cs:0.2,-1.9);
		\node at (axis cs:1.6,-0.85) {\color{blue!80!black} \textbf{\small strong guide}};
		\node at (axis cs:1.85,-1.15) {\color{blue!50!black} \textbf{\small weak guide}};
		\node at (axis cs:1.5,-1.85) {\color{blue!20!black} \textbf{\small no guide}};
		
		\draw [thick,green!70!black,->] (axis cs: 7.9,-0.2) to (axis cs:7,-0.45);
		\node at (axis cs:9,-0.15) {\color{green!70!black} \textbf{\small optimal}};
		\node at (axis cs:9,-0.4) {\color{green!70!black} \textbf{\small trajectory}};
		\node at (axis cs:5, 0.105) {\color{green!50!black} \textbf{goal 1}};
	\end{groupplot}
	\node[text width=9cm,align=center,anchor=north] at ([yshift=-4mm]myplots c1r1.south) {\captionof{subfigure}{\small Out of two potential local minimum solutions for satisfying the task, the green is feasible and optimal with respect to minimizing the input energy. The red is not feasible due to  actuator constraints.}};
	\node[text width=9cm,align=center,anchor=north] at ([yshift=-4mm]myplots c1r2.south) {\captionof{subfigure}{\small Evolution of the robustness measure $\rho^{\mu_1}(\x(t))$ for the optimal trajectory (green), and three $\gamma_1(t)$ guide funnels aiming to enforce it in a progressively more aggressive manner (blue).}};
	\end{tikzpicture}	
	\caption{Illustration of potential solutions and guides to satisfying the task outlined for the case study. The robot needs to visit both goal regions once every 4 seconds throughout the first 6 seconds of its motion.}
	\label{fig:setup}
\end{figure}

\vspace{-3mm}
\begin{figure*}[t]
	\centering
	\begin{tikzpicture}
	\pgfplotstableread{data/strong_guide_success_C.dat}\strongGuideSuccessC;
	\pgfplotstableread{data/weak_guide_success_C.dat}\weakGuideSuccessC;
	\pgfplotstableread{data/no_guide_success_C.dat}\noGuideSuccessC;
	\pgfplotstableread{data/strong_guide_success_rho.dat}\strongGuideSuccessRho;
	\pgfplotstableread{data/weak_guide_success_rho.dat}\weakGuideSuccessRho;
	\pgfplotstableread{data/no_guide_success_rho.dat}\noGuideSuccessRho;
	\captionsetup{format=plain,labelformat=parens,labelsep=space}
	\setcounter{subfigure}{0}
	\begin{groupplot}[group style={group name=myplots,group size= 3 by 2,horizontal sep=4mm,vertical sep=-11mm},width=\linewidth]
	
	\nextgroupplot[grid=both,width=0.39\linewidth,xmin=0,ymin=0.0,ymax=4.5,xmax=100,ylabel=$C$,axis x line=bottom, axis y line = left,ylabel near ticks,ylabel style={rotate=-90},xticklabels={,,},minor x tick num=1, minor y tick num = 1, enlarge y limits={abs=0.02},grid style={line width=.2pt, draw=gray!50}, major grid style={line width=.4pt,draw=gray!80},yscale=0.7,legend style={font=\small},domain=0.4:3.0]
	\addplot [very thick, lightblue, mark=.] table[x=k,y=C_base] {\strongGuideSuccessC}; \label{plots:resmin}
	\addplot [very thick, lightred, mark=.] table[x=k,y=C_ag] {\strongGuideSuccessC}; \label{plots:resAG}
	\addplot [very thick, darkgreen, mark=.] table[x=k,y=C_new] {\strongGuideSuccessC}; \label{plots:resnu3}
	
	\addplot [name path=upper,draw=none] table[x=k,y expr=\thisrow{C_base_max}] {\strongGuideSuccessC};
	\addplot [name path=lower,draw=none] table[x=k,y expr=\thisrow{C_base_min}] {\strongGuideSuccessC};
	\addplot [fill=lightblue, opacity=0.3] fill between[of=upper and lower];		
	\addplot [name path=upper,draw=none] table[x=k,y expr=\thisrow{C_ag_max}] {\strongGuideSuccessC};
	\addplot [name path=lower,draw=none] table[x=k,y expr=\thisrow{C_ag_min}] {\strongGuideSuccessC};
	\addplot [fill=lightred, opacity=0.3] fill between[of=upper and lower];		
	\addplot [name path=upper,draw=none] table[x=k,y expr=\thisrow{C_new_max}] {\strongGuideSuccessC};
	\addplot [name path=lower,draw=none] table[x=k,y expr=\thisrow{C_new_min}] {\strongGuideSuccessC};
	\addplot [fill=darkgreen, opacity=0.3] fill between[of=upper and lower];
	
	\nextgroupplot[grid=both,width=0.39\linewidth,xmin=0,ymin=0.0,ymax=4.5,xmax=100,axis x line=bottom, axis y line = left,yticklabels={,,},xticklabels={,,},minor x tick num=1, minor y tick num = 1, enlarge y limits={abs=0.02},grid style={line width=.2pt, draw=gray!50}, major grid style={line width=.4pt,draw=gray!80},yscale=0.7,legend style={font=\small},domain=0.0:4.5]
	\addplot [very thick, lightblue, mark=.] table[x=k,y=C_base] {\weakGuideSuccessC};
	\addplot [very thick, lightred, mark=.] table[x=k,y=C_ag] {\weakGuideSuccessC};
	\addplot [very thick, darkgreen, mark=.] table[x=k,y=C_new] {\weakGuideSuccessC};
	
	\addplot [name path=upper,draw=none] table[x=k,y expr=\thisrow{C_base_max}] {\weakGuideSuccessC};
	\addplot [name path=lower,draw=none] table[x=k,y expr=\thisrow{C_base_min}] {\weakGuideSuccessC};
	\addplot [fill=lightblue, opacity=0.3] fill between[of=upper and lower];		
	\addplot [name path=upper,draw=none] table[x=k,y expr=\thisrow{C_ag_max}] {\weakGuideSuccessC};
	\addplot [name path=lower,draw=none] table[x=k,y expr=\thisrow{C_ag_min}] {\weakGuideSuccessC};
	\addplot [fill=lightred, opacity=0.3] fill between[of=upper and lower];		
	\addplot [name path=upper,draw=none] table[x=k,y expr=\thisrow{C_new_max}] {\weakGuideSuccessC};
	\addplot [name path=lower,draw=none] table[x=k,y expr=\thisrow{C_new_min}] {\weakGuideSuccessC};
	\addplot [fill=darkgreen, opacity=0.3] fill between[of=upper and lower];	
	
	\nextgroupplot[grid=both,width=0.39\linewidth,xmin=0,ymin=0.0,ymax=4.5,xmax=100,axis x line=bottom, axis y line = left,yticklabels={,,}, xticklabels={,,},minor x tick num=1, minor y tick num = 1, enlarge y limits={abs=0.02},grid style={line width=.2pt, draw=gray!50}, major grid style={line width=.4pt,draw=gray!80},yscale=0.7,legend style={font=\small},domain=0.0:4.5]
	\addplot [very thick, lightblue, mark=.] table[x=k,y=C_base] {\noGuideSuccessC};
	\addplot [very thick, lightred, mark=.] table[x=k,y=C_ag] {\noGuideSuccessC};
	\addplot [very thick, darkgreen, mark=.] table[x=k,y=C_new] {\noGuideSuccessC};
	
	\addplot [name path=upper,draw=none] table[x=k,y expr=\thisrow{C_base_max}] {\noGuideSuccessC};
	\addplot [name path=lower,draw=none] table[x=k,y expr=\thisrow{C_base_min}] {\noGuideSuccessC};
	\addplot [fill=lightblue, opacity=0.3] fill between[of=upper and lower];		
	\addplot [name path=upper,draw=none] table[x=k,y expr=\thisrow{C_ag_max}] {\noGuideSuccessC};
	\addplot [name path=lower,draw=none] table[x=k,y expr=\thisrow{C_ag_min}] {\noGuideSuccessC};
	\addplot [fill=lightred, opacity=0.3] fill between[of=upper and lower];		
	\addplot [name path=upper,draw=none] table[x=k,y expr=\thisrow{C_new_max}] {\noGuideSuccessC};
	\addplot [name path=lower,draw=none] table[x=k,y expr=\thisrow{C_new_min}] {\noGuideSuccessC};
	\addplot [fill=darkgreen, opacity=0.3] fill between[of=upper and lower];	
	
	\nextgroupplot[grid=both,width=0.39\linewidth,xmin=0,ymin=-0.5,ymax=0.05,xmax=100,xlabel=$k$,ylabel=$\rho^{\phi}$,axis x line=bottom, axis y line = left,ylabel near ticks,ytick={-0.5,0},ylabel shift = -15pt,ylabel style={rotate=-90},minor x tick num=1, minor y tick num = 4, enlarge y limits={abs=0.02},grid style={line width=.2pt, draw=gray!50}, major grid style={line width=.4pt,draw=gray!80},yscale=0.7]
	\addplot [very thick, lightblue, mark=.] table[x=k,y=rho_base] {\strongGuideSuccessRho};
	\addplot [very thick, lightred, mark=.] table[x=k,y=rho_ag] {\strongGuideSuccessRho};
	\addplot [very thick, darkgreen, mark=.] table[x=k,y=rho_new] {\strongGuideSuccessRho};
	
	\addplot [name path=upper,draw=none] table[x=k,y expr=\thisrow{rho_base_max}] {\strongGuideSuccessRho};
	\addplot [name path=lower,draw=none] table[x=k,y expr=\thisrow{rho_base_min}] {\strongGuideSuccessRho};
	\addplot [fill=lightblue, opacity=0.3] fill between[of=upper and lower];		
	\addplot [name path=upper,draw=none] table[x=k,y expr=\thisrow{rho_ag_max}] {\strongGuideSuccessRho};
	\addplot [name path=lower,draw=none] table[x=k,y expr=\thisrow{rho_ag_min}] {\strongGuideSuccessRho};
	\addplot [fill=lightred, opacity=0.3] fill between[of=upper and lower];		
	\addplot [name path=upper,draw=none] table[x=k,y expr=\thisrow{rho_new_max}] {\strongGuideSuccessRho};
	\addplot [name path=lower,draw=none] table[x=k,y expr=\thisrow{rho_new_min}] {\strongGuideSuccessRho};
	\addplot [fill=darkgreen, opacity=0.3] fill between[of=upper and lower];	
	
	\nextgroupplot[grid=both,width=0.39\linewidth,xmin=0,ymin=-0.5,ymax=0.05,xmax=100,xlabel=$k$,axis x line=bottom, axis y line = left,yticklabels={,,},minor x tick num=1, minor y tick num = 1, enlarge y limits={abs=0.02},grid style={line width=.2pt, draw=gray!50}, major grid style={line width=.4pt,draw=gray!80},yscale=0.7]
	\addplot [very thick, lightblue, mark=.] table[x=k,y=rho_base] {\weakGuideSuccessRho};
	\addplot [very thick, lightred, mark=.] table[x=k,y=rho_ag] {\weakGuideSuccessRho};
	\addplot [very thick, darkgreen, mark=.] table[x=k,y=rho_new] {\weakGuideSuccessRho};

	\addplot [name path=upper,draw=none] table[x=k,y expr=\thisrow{rho_base_max}] {\weakGuideSuccessRho};
	\addplot [name path=lower,draw=none] table[x=k,y expr=\thisrow{rho_base_min}] {\weakGuideSuccessRho};
	\addplot [fill=lightblue, opacity=0.3] fill between[of=upper and lower];		
	\addplot [name path=upper,draw=none] table[x=k,y expr=\thisrow{rho_ag_max}] {\weakGuideSuccessRho};
	\addplot [name path=lower,draw=none] table[x=k,y expr=\thisrow{rho_ag_min}] {\weakGuideSuccessRho};
	\addplot [fill=lightred, opacity=0.3] fill between[of=upper and lower];		
	\addplot [name path=upper,draw=none] table[x=k,y expr=\thisrow{rho_new_max}] {\weakGuideSuccessRho};
	\addplot [name path=lower,draw=none] table[x=k,y expr=\thisrow{rho_new_min}] {\weakGuideSuccessRho};
	\addplot [fill=darkgreen, opacity=0.3] fill between[of=upper and lower];	
	
	\nextgroupplot[grid=both,width=0.39\linewidth,xmin=0,ymin=-0.5,ymax=0.05,xmax=100,xlabel=$k$,axis x line=bottom, axis y line = left,yticklabels={,,},minor x tick num=1, minor y tick num = 1, enlarge y limits={abs=0.02},grid style={line width=.2pt, draw=gray!50}, major grid style={line width=.4pt,draw=gray!80},yscale=0.7]
	\addplot [very thick, lightblue, mark=.] table[x=k,y=rho_base] {\noGuideSuccessRho};
	\addplot [very thick, lightred, mark=.] table[x=k,y=rho_ag] {\noGuideSuccessRho};
	\addplot [very thick, darkgreen, mark=.] table[x=k,y=rho_new] {\noGuideSuccessRho};
	
	\addplot [name path=upper,draw=none] table[x=k,y expr=\thisrow{rho_base_max}] {\noGuideSuccessRho};
	\addplot [name path=lower,draw=none] table[x=k,y expr=\thisrow{rho_base_min}] {\noGuideSuccessRho};
	\addplot [fill=lightblue, opacity=0.3] fill between[of=upper and lower];		
	\addplot [name path=upper,draw=none] table[x=k,y expr=\thisrow{rho_ag_max}] {\noGuideSuccessRho};
	\addplot [name path=lower,draw=none] table[x=k,y expr=\thisrow{rho_ag_min}] {\noGuideSuccessRho};
	\addplot [fill=lightred, opacity=0.3] fill between[of=upper and lower];		
	\addplot [name path=upper,draw=none] table[x=k,y expr=\thisrow{rho_new_max}] {\noGuideSuccessRho};
	\addplot [name path=lower,draw=none] table[x=k,y expr=\thisrow{rho_new_min}] {\noGuideSuccessRho};
	\addplot [fill=darkgreen, opacity=0.3] fill between[of=upper and lower];	
	\end{groupplot}
	\node[text width=6cm,align=center,anchor=north] at ([yshift=-8mm]myplots c1r2.south) {\captionof{subfigure}{strong guidance}};
	\node[text width=6cm,align=center,anchor=north] at ([yshift=-8mm]myplots c2r2.south) {\captionof{subfigure}{weak guidance}};
	\node[text width=6cm,align=center,anchor=north] at ([yshift=-8mm]myplots c3r2.south) {\captionof{subfigure}{no guidance}};
	\path (myplots c1r1.north west|-current bounding box.north)--
	coordinate(legendpos)
	(myplots c3r1.north east|-current bounding box.north);
	\matrix[
	matrix of nodes,
	anchor=south,
	draw,
	inner sep=0.1em,
	draw
	]at([yshift=1ex, xshift=-2mm]legendpos)
	{
		\ref{plots:resmin}& \small traditional &[3pt]
		\ref{plots:resAG}& \small AG &[3pt]
		\ref{plots:resnu3}& \small new ($\nu = 3$) \\};
	\end{tikzpicture}
	\caption{Convergence of the cost and robustness metrics throughout the \ac{PI2} algorithm running under strong, weak, or no guidance. The results show the distribution of successful runs obtained from 25 sample runs of the algorithm, for the discussed traditional, AG, and newly proposed robustness metrics. The median task robustness measure $\rho^{\phi}$ and achieved cost $C$ are plotted as a function of the \ac{PI2} iteration number $k$. All results, excluding the top and bottom 10\%, lie in the shaded regions. The newly defined metric always converges to near the optimal solution and achieves task satisfaction most quickly and with least variance. The optimal cost can be calculated as $C_{\text{opt}} = 2.02$.}
	\label{fig:resultsSuccess}
\end{figure*}

\addtolength{\textheight}{0cm}   

\section{CONCLUSIONS} \label{section:conclusions}

In this paper, we presented theoretical results regarding the form of possible robustness metrics for quantifying the satisfaction of \ac{STL} tasks. The findings motivated the definition of a sample new robustness metric, whose improved performance for accelerating learning was demonstrated through a simulation case study. Our preliminary results are promising and motivate further research into the potential of defining robustness metrics for their role in general-purpose reward shaping. Further work on the topic includes a more rigorous exploration of the properties of potential robustness metrics and their imposed restrictions on the metric itself, as well as conducting a thorough simulation study to verify their superior performance across a wider spectrum of scenarios.




%
%
%

\bibliographystyle{IEEEtran}
{\renewcommand{\baselinestretch}{0.98}
\bibliography{IEEEabrv,MyBib_conf}

\end{document}